\documentclass[numberwithinsect,a4paper,UKenglish,cleveref, autoref, thm-restate]{lipics-v2021}


\bibliographystyle{plainurl}

\nolinenumbers

\title{On finding short reconfiguration sequences between independent sets} 

\titlerunning{On finding short reconfiguration sequences between independent sets} 

\author{Akanksha Agrawal}{Indian Institute of Technology Madras, Chennai, India}{akanksha@cse.iitm.ac.in}{}{}

\author{Soumita Hait}{Indian Institute of Technology, Kharagpur, India}{soumitahait7321@gmail.com}{}{}

\author{Amer~E.~Mouawad}{American University of Beirut, Lebanon \and University of Bremen, Germany}{aa368@aub.edu.lb}{https://orcid.org/0000-0003-2481-4968}{Research supported by
the Alexander von Humboldt Foundation, by PHC Cedre project 2022 ``PLR'', and partially supported by URB project ``A theory of change through the lens of reconfiguration''.}

\authorrunning{A. Agrawal, S. Hait, and A.~E.~Mouawad} 

\Copyright{Akanksha Agrawal, Soumita Hait, and Amer~E.~Mouawad} 

\ccsdesc{Theory of computation~Parameterized complexity and exact algorithms}

\keywords{Token sliding, token jumping, fixed-parameter tractability, combinatorial reconfiguration, shortest reconfiguration sequence} 

\category{} 

\relatedversion{} 





\hideLIPIcs  

\EventEditors{Petr A. Golovach and Meirav Zehavi}
\EventNoEds{2}
\EventLongTitle{16th International Symposium on Parameterized and Exact Computation (IPEC 2021)}
\EventShortTitle{IPEC 2021}
\EventAcronym{IPEC}
\EventYear{2021}
\EventDate{September 8--10, 2021}
\EventLocation{Lisbon, Portugal}
\EventLogo{}
\SeriesVolume{214}
\ArticleNo{2}

\setlength\parindent{24pt}

\newcommand{\C}[1]{\mathcal{#1}}

\usepackage{todonotes}

\newcommand{\WOH}{\textsf{W[1]}-hard}

\newcommand{\mc}{\mathcal}
\newcommand{\Oh}{\mathcal{O}}

\begin{document}

\maketitle

\begin{abstract}
Assume we are given a graph $G$, two independent sets $S$ and $T$ in $G$ of size $k \geq 1$, and a positive integer $\ell \geq 1$. 
The goal is to decide whether there exists a sequence $\langle I_0, I_1, ..., I_\ell \rangle$ of independent sets such that for all $j \in \{0,\ldots,\ell-1\}$ the 
set $I_j$ is an independent set of size~$k$, $I_0 = S$, $I_\ell = T$, and $I_{j+1}$ is obtained from $I_j$ by a predetermined reconfiguration rule. 
We consider two reconfiguration rules, namely token sliding and token jumping. Intuitively, we view each independent set as a collection of tokens placed 
on the vertices of the graph. Then, the \textsc{Token Sliding Optimization (TSO)} problem asks whether there exists a sequence of at most $\ell$ steps 
that transforms $S$ into $T$, where at each step we are allowed to slide one token from a vertex to an unoccupied neighboring vertex (while maintaining independence). 
In the \textsc{Token Jumping Optimization (TJO)} problem, at each step, we are allowed to jump one token from a vertex to any other unoccupied vertex of 
the graph (as long as we maintain independence). Both \textsc{TSO} and \textsc{TJO} are known to be fixed-parameter tractable when parameterized by $\ell$ 
on nowhere dense classes of graphs. In this work, we investigate the boundary of tractability for sparse classes of graphs. We show that both problems are 
fixed-parameter tractable for parameter $k + \ell + d$ on $d$-degenerate graphs as well as for parameter $|M| + \ell + \Delta$ on graphs having a modulator 
$M$ whose deletion leaves a graph of maximum degree $\Delta$. We complement these result by showing that for parameter $\ell$ alone both problems 
become W[1]-hard already on $2$-degenerate graphs. Our positive result makes use of the notion of independence covering families introduced by 
Lokshtanov et al.~\cite{DBLP:journals/talg/LokshtanovPSSZ20}. Finally, we show as a side result that using such families one can obtain a simpler 
and unified algorithm for the standard \textsc{Token Jumping Reachability} problem (a.k.a. \textsc{Token Jumping}) parameterized by $k$ on both 
degenerate and nowhere dense classes of graphs. 
\end{abstract}

\section{Introduction}\label{sec:intro}
Given a simple undirected graph $G$, a set of 
vertices $I \subseteq V(G)$ is an \emph{independent set} if the vertices of $I$ are pairwise non-adjacent. 
Finding an independent set of size $k$, i.e., the \textsc{Independent Set (IS)} problem, is known to be NP-complete~\cite{DBLP:conf/coco/Karp72} and 
W[1]-complete parameterized by solution size $k$~\cite{DBLP:journals/siamcomp/DowneyF95}. 
We view an independent set as a collection of $k$ tokens placed on the vertices of a graph such that no two tokens are placed on adjacent vertices. 
This gives rise to two natural adjacency relations between independent sets (or token configurations), also called \emph{reconfiguration steps}. 
These reconfiguration steps, in turn, give rise to several \emph{combinatorial reconfiguration} 
problems~\cite{DBLP:books/cu/p/Heuvel13,DBLP:journals/algorithms/Nishimura18,DBLP:journals/corr/abs-2204-10526}.

In the \textsc{Token Sliding Reachability (TSR)} problem, introduced by Hearn and Demaine \cite{DBLP:journals/tcs/HearnD05}, two independent sets are adjacent if one can 
be obtained from the other by removing a token from a vertex $u$ and immediately placing it on another unoccupied vertex $v$ with the requirement that $\{u, v\}$ must be an 
edge of the graph. The token is said to \emph{slide} from vertex $u$ to vertex $v$ along the edge $\{u, v\}$. 
Generally speaking, in the \textsc{Token Sliding Reachability} problem, we are given a graph $G$ and two independent sets $S$ and $T$ of size $k$ in $G$. 
The goal is to decide whether there exists a sequence of slides (a \emph{reconfiguration sequence}) that transforms $S$ to $T$.
The \textsc{TSR} problem has been extensively studied~\cite{DBLP:conf/wg/BonamyB17,DBLP:conf/swat/BonsmaKW14,DBLP:conf/isaac/DemaineDFHIOOUY14,
DBLP:conf/isaac/Fox-EpsteinHOU15,DBLP:conf/tamc/ItoKOSUY14,DBLP:journals/tcs/KaminskiMM12,DBLP:journals/jcss/LokshtanovMPRS18}. 
It is known that the problem is PSPACE-complete, even on restricted 
graph classes such as planar graphs of bounded bandwidth (and hence pathwidth)~\cite{DBLP:journals/tcs/HearnD05,DBLP:journals/jcss/Wrochna18, DBLP:conf/iwpec/Zanden15}, split 
graphs~\cite{DBLP:journals/mst/BelmonteKLMOS21}, and bipartite graphs~\cite{DBLP:journals/talg/LokshtanovM19}. However, \textsc{Token Sliding Reachability} can be decided in 
polynomial time on trees~\cite{DBLP:conf/isaac/DemaineDFHIOOUY14}, interval graphs~\cite{DBLP:conf/wg/BonamyB17}, bipartite permutation and bipartite 
distance-hereditary graphs~\cite{DBLP:conf/isaac/Fox-EpsteinHOU15}, line graphs~\cite{DBLP:journals/tcs/ItoDHPSUU11}, and claw-free graphs~\cite{DBLP:conf/swat/BonsmaKW14}.
In the \textsc{Token Sliding Optimization (TSO)} problem, we are additionally given a parameter $\ell$ and the goal is to decide if $S$ can be transformed to $T$ in at most 
$\ell$ token slides. Very little is known about the optimization variant of the problem other than the hardness results that follow immediately from the reachability variant. 
In fact, to the best of our knowledge, the only known polynomial-time solvable instances of \textsc{TSO} are those  
restricted to interval graphs~\cite{DBLP:journals/tcs/YamadaU21,DBLP:journals/tcs/ItoO19}, cographs~\cite{DBLP:journals/tcs/KaminskiMM12}, and spider trees (trees obtained by attaching paths to a central vertex)~\cite{DBLP:conf/ciac/HoangKU19}.  

In the \textsc{Token Jumping Reachability (TJR)} problem, introduced by Kami\'{n}ski et al.~\cite{DBLP:journals/tcs/KaminskiMM12}, we drop the restriction that the token should 
move along an edge of $G$ and instead we allow it to move to any unoccupied vertex of $G$ provided it does not break the independence of the set of tokens. 
That is, a single reconfiguration step consists of first removing a token on some vertex $u$ and then immediately adding it back on any other unoccupied vertex $v$, 
as long as no two tokens become adjacent. The token is said to \emph{jump} from vertex $u$ to vertex $v$.
\textsc{Token Jumping Reachability} is also PSPACE-complete on planar graphs 
of bounded bandwidth~\cite{DBLP:journals/tcs/HearnD05,DBLP:journals/jcss/Wrochna18,DBLP:conf/iwpec/Zanden15}. 
Lokshtanov and Mouawad~\cite{DBLP:journals/talg/LokshtanovM19} showed that, unlike \textsc{Token Sliding Reachability}, which is PSPACE-complete on bipartite graphs, 
the \textsc{Token Jumping Reachability} problem becomes NP-complete on bipartite graphs. 
On the positive side, it is ``easy'' to show that \textsc{Token Jumping Reachability} can be decided in polynomial-time on trees (and even on split/chordal graphs) 
since we can simply jump tokens to leaves (resp. vertices that only appear in the bag of a leaf in the clique tree) to transform one independent set into another. 
In the \textsc{Token Jumping Optimization (TJO)} problem, we are additionally given a parameter $\ell$ and the goal is to decide if $S$ can be transformed to $T$ in at most 
$\ell$ token jumps. To the best of our knowledge, the only known polynomial-time solvable instances of \textsc{TJO} are those restricted to 
even-hole-free graphs~\cite{DBLP:journals/tcs/KaminskiMM12,DBLP:journals/algorithms/MouawadNRS18}. 

In this paper we focus on the parameterized complexity of the aforementioned problems with respect to parameters $k$ and $\ell$ and when restricted to sparse 
classes of graphs. Given an NP-hard or PSPACE-hard problem, \emph{parameterized complexity}~\cite{DBLP:series/mcs/DowneyF99} allows us to refine the notion of hardness; 
does the hardness come from the whole instance 
or from a small parameter? A problem $\Pi$ is \emph{FPT (fixed-parameter tractable) parameterized by $k$} 
if one can solve it in time $f(k) \cdot poly(n)$, for some computable 
function $f$ (sometimes called FPT-time). In other words, the combinatorial explosion can be restricted to the parameter $k$. In the rest of the paper, we mainly consider 
parameters $k$ (the number of tokens) and $\ell$ (the number of reconfiguration steps). 
\textsc{TSO} and \textsc{TJO} are known to be W[1]-hard (and XNL-complete~\cite{bodlaender_et_al:LIPIcs.IPEC.2021.9}) parameterized by $k + \ell$ on general graphs~\cite{DBLP:journals/corr/abs-2204-10526}. 
\textsc{TSR} and \textsc{TJR} are known to be W[1]-hard (and XL-complete~\cite{bodlaender_et_al:LIPIcs.IPEC.2021.9}) parameterized by $k$ on general graphs~\cite{DBLP:journals/jcss/LokshtanovMPRS18}. 
When we restrict our attention to sparse classes of graphs, \textsc{TSO} and \textsc{TJO} are known to be fixed-parameter tractable when parameterized by $\ell$ 
on nowhere dense classes of graphs~\cite{DBLP:journals/algorithms/MouawadNRS18}. 
\textsc{TJR} and \textsc{TJO} are known to be fixed-parameter tractable parameterized by $k$ on graphs of bounded degree~\cite{DBLP:journals/dam/ItoKOSUY20}. 
For \textsc{TJR}, the problem becomes fixed-parameter tractable parameterized by $k$ on biclique-free classes of graphs~\cite{DBLP:conf/fct/BousquetMP17}. 
Finally, for \textsc{TSR}, the problem becomes fixed-parameter tractable parameterized by $k$ on planar graphs, chordal graphs of bounded clique number, and 
graphs of bounded degree~\cite{DBLP:journals/corr/abs-2204-05549}. We refer the reader to the recent survey by Bousquet et al.~\cite{DBLP:journals/corr/abs-2204-10526} 
for more background on the parameterized complexity of these problems. 

Given that \textsc{TSO} and \textsc{TJO} are fixed-parameter tractable when parameterized by $\ell$ 
on nowhere dense classes of graphs, it is natural to ask whether this result can be extended beyond 
nowhere dense graphs to biclique-free graphs. Even simpler, can we show that \textsc{TSO} and \textsc{TJO} 
remain fixed-parameter tractable when parameterized by $\ell$ on graph of bounded degeneracy? Recall that 
any degenerate or nowhere dense class of graphs is a biclique-free class, but not vice versa. Motivated by these questions, 
we show the following:

\begin{itemize} 
\item Both problems are fixed-parameter tractable for parameter $k + \ell + d$ on $d$-degenerate graphs; 
\item Both problems are fixed-parameter tractable for parameter $|N| + k + \ell + d$ on graphs having a modulator $N$ whose deletion leaves a $d$-degenerate graph (assuming $N$ is given as part of the input); and
\item Both problems are fixed-parameter tractable for parameter $|M| + \ell + \Delta$ on graphs having a modulator $M$ whose deletion leaves a graph of maximum degree $\Delta$. 
\item We complement these result by showing that for parameter $\ell$ alone both problems become W[1]-hard already on $2$-degenerate graphs. 
\end{itemize}

In fact, our hardness reductions construct $2$-degenerate graphs which can be partitioned into two sets $V_1$ and $V_2$, where $V_1$ is an independent set 
and every vertex in $V_2$ has constant degree in the graph. Hence, our positive result for parameter $|M| + \ell + \Delta$ shows that when $|M|$ is part of our parameter 
we can drop $k$ and still obtain fixed-parameter tractable algorithms; and when $|M|$ (and $k$) is not part of the parameter the problem is W[1]-hard. 

Most of our positive results make use of the notion of independence covering 
families introduced by Lokshtanov et al.~\cite{DBLP:journals/talg/LokshtanovPSSZ20}, 
which we believe could be of independent interest for the reconfiguration of independent sets. 
Let us start by formally defining such families and the various algorithms for extracting them on different graph classes. 

\begin{definition}[\cite{DBLP:journals/talg/LokshtanovPSSZ20}]\label{def-family}
For a graph $G$ and $k \geq 1$, a family of independent sets of $G$ is called an \emph{independence covering family for $(G, k)$}, denoted by $\mathcal{F}(G, k)$,
if for any independent set $I$ in $G$ of size at most $k$, there exists $J \in \mathcal{F}(G, k)$ such that $I \subseteq J$.
\end{definition}

\begin{theorem}[\cite{DBLP:journals/talg/LokshtanovPSSZ20}]\label{thm-family-degenerate}
There is a deterministic algorithm that given a $d$-degenerate graph $G$ and $k \geq 1$, runs in time
$O((kd)^{O(k)} \cdot (n + m) \log n)$, and outputs an independence covering family for $(G, k)$ of size at most $O((kd)^{O(k)} \cdot \log n)$.
\end{theorem}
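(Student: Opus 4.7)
The plan is to combine a randomized construction with a derandomization step. First I would compute, in $O(n+m)$ time, a degeneracy ordering $v_1, \ldots, v_n$ of $V(G)$ arranged so that each $v_i$ has at most $d$ neighbors among $\{v_1, \ldots, v_{i-1}\}$; denote this set of earlier neighbors by $N^-(v_i)$.

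Next, I would analyze the following randomized procedure, which uses $q := kd+1$ colors. Color every vertex independently and uniformly at random with a color from $[q]$, then scan $v_1, \ldots, v_n$ in order and insert $v_i$ into $J$ if and only if $v_i$ has color $1$ and no vertex of $N^-(v_i)$ has color $1$. Whenever an edge $\{v_i, v_j\}$ with $i<j$ has both endpoints colored $1$, the later endpoint is rejected, so $J$ is automatically an independent set.

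Now fix an independent set $I$ with $|I|\le k$. Since $I$ itself is independent, $N^-(I) \cap I = \emptyset$, and the event ``$I \subseteq J$'' factors into the two independent events ``every $v \in I$ receives color $1$'' and ``every $u \in N^-(I)\setminus I$ receives a color other than $1$''. Using $|I|\le k$ and $|N^-(I)\setminus I|\le kd$, the success probability is at least $q^{-k}(1-1/q)^{kd} = \Omega((kd)^{-k})$. A union bound over the at most $n^k$ independent sets of size at most $k$ then shows that $O((kd)^{O(k)} \log n)$ independent runs suffice to cover them all with positive probability.

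The step demanding most care is derandomization. For this I would appeal to a splitter-style construction in the spirit of Naor, Schulman, and Srinivasan: one that deterministically produces, in time $O((kd)^{O(k)} (n+m) \log n)$, a family $\mathcal{H}$ of colorings $V(G) \to [q]$ of size $O((kd)^{O(k)} \log n)$ such that for every set $S \subseteq V(G)$ with $|S|\le k(d+1)$ and every choice of an at-most-$k$-subset of $S$ to receive color $1$, some $h \in \mathcal{H}$ realizes that pattern on $S$. Running the greedy procedure once per $h \in \mathcal{H}$ then yields an independence covering family for $(G,k)$ within the claimed resource bounds. The hard part is keeping the family size at $(kd)^{O(k)}$ rather than the easier $2^{O(kd)}$ one gets from generic universal-set constructions; this requires exploiting that we only need to single out a $k$-subset of a $k(d+1)$-set, which permits a considerably more compact pseudorandom object.
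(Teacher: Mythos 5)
This theorem is cited from Lokshtanov et al.~\cite{DBLP:journals/talg/LokshtanovPSSZ20} rather than proved in the present paper, so the comparison is against that work. Your randomized core --- degeneracy ordering, i.i.d.\ $(kd+1)$-coloring, left-to-right greedy filter --- and the factorization of the event $I \subseteq J$ into the two independent events ``$I$ is monochromatic in color~$1$'' and ``$\bigcup_{v\in I} N^-(v)\setminus I$ avoids color~$1$'' match the construction and analysis used there, and the $\Omega((kd)^{-k})$ success-probability bound is correct.

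The gap is the derandomization, which is exactly where the technical weight of the theorem lies. You correctly diagnose that a generic $(n,k(d+1))$-universal set costs $2^{O(kd)}\log n$ and that the asymmetry (only a $k$-subset must be singled out) has to be exploited, but you stop at asserting that a compact enough splitter-style object exists. Concretely, what is required is a family $\mathcal{H}\subseteq 2^{V(G)}$ such that for every disjoint pair $A,B\subseteq V(G)$ with $|A|\le k$ and $|B|\le kd$, some $H\in\mathcal{H}$ has $A\subseteq H$ and $B\cap H=\emptyset$; one then runs your greedy scan with the two-coloring ``$1$ on $H$, $2$ off $H$'' (keeping all $q$ colors in the derandomized step, as you propose, is an overshoot). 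Such an $\mathcal{H}$ of size $(kd)^{O(k)}\log n$ is built by composing (i) a $k(d+1)$-perfect hash family from $V(G)$ into a range of size $(kd)^{O(1)}$, which has size $(kd)^{O(1)}\log n$ and is constructible in near-linear time, with (ii) the exhaustive list of all at-most-$k$-element subsets $W$ of that range, adding $H_{f,W}=f^{-1}(W)$ to the family for each pair $(f,W)$; for a given $(A,B)$ one picks $f$ injective on $A\cup B$ and sets $W=f(A)$. Without supplying this lopsided universal family (or an equivalent construction), your argument establishes a randomized covering family but not the deterministic algorithm the statement claims.
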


\begin{theorem}[\cite{DBLP:journals/talg/LokshtanovPSSZ20}]\label{thm-family-mod}
Let $k,d \in \mathbb{N}$ and $G$ be a graph. Let $S \subseteq V(G)$ such that $G - S$ is $d$-degenerate. 
There is a deterministic algorithm that given a $G$, $S$, and $k,d \in \mathbb{N}$, runs in time
$O(2^{|S|} \cdot (kd)^{O(k)} \cdot 2^{O(kd)} \cdot (n + m) \log n)$, and outputs an independence covering family for $(G, k)$ 
of size at most $O(2^{|S|} \cdot (kd)^{O(k)} \cdot 2^{O(kd)} \cdot \log n)$.
\end{theorem}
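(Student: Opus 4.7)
The plan is to reduce to the purely degenerate case handled by Theorem~\ref{thm-family-degenerate} by exhaustively guessing the intersection of the target independent set with the modulator $S$. Any independent set $I$ in $G$ of size at most $k$ decomposes uniquely as $I = S' \cup I'$, where $S' = I \cap S$ is an independent subset of $S$, and $I' = I \setminus S$ is an independent set of $G - S$ that additionally avoids $N_G(S')$. So once $S'$ is guessed, covering $I$ reduces to covering $I'$ in a degenerate induced subgraph.

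Concretely, I would iterate over all $2^{|S|}$ subsets $S' \subseteq S$, discarding those that are not independent in $G$ or have size larger than $k$. For each surviving $S'$, form the induced subgraph $G_{S'} := G - (S \cup N_G(S'))$. Since $G_{S'}$ is an induced subgraph of $G - S$, it is still $d$-degenerate. Apply Theorem~\ref{thm-family-degenerate} to $(G_{S'},\, k - |S'|)$ to obtain an independence covering family $\mathcal{F}_{S'}$ of size $(kd)^{O(k)} \log n$ in time $(kd)^{O(k)} (n + m) \log n$. Then output
\[
\mathcal{F}(G, k) \;:=\; \bigcup_{S'} \{\, S' \cup J \;:\; J \in \mathcal{F}_{S'} \,\}.
\]
Each set $S' \cup J$ is independent in $G$: $S'$ is independent by construction, $J$ is independent in $G_{S'}$ (hence in $G$), and $J$ is disjoint from $N_G(S')$ by the choice of $G_{S'}$, so no edge connects $S'$ to $J$.

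For correctness, fix any independent set $I$ in $G$ with $|I| \leq k$, let $S' := I \cap S$ and $I' := I \setminus S$. Then $S'$ is one of the enumerated guesses, and $I' \subseteq V(G_{S'})$ is independent in $G_{S'}$ of size at most $k - |S'|$; Theorem~\ref{thm-family-degenerate} hands us $J \in \mathcal{F}_{S'}$ with $I' \subseteq J$, and therefore $I \subseteq S' \cup J \in \mathcal{F}(G, k)$. Summing the per-guess bounds over the $2^{|S|}$ surviving subsets gives size $2^{|S|} \cdot (kd)^{O(k)} \log n$ and runtime $2^{|S|} \cdot (kd)^{O(k)} (n + m) \log n$, essentially matching the claim.

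The main obstacle is accounting for the extra $2^{O(kd)}$ overhead appearing in the theorem statement, which the naive reduction above does not produce. I expect this factor to arise from a refinement that, for each guess $S'$, also enumerates how the $O(kd)$ ``boundary'' vertices in (a $d$-degenerate ordering of) $N_G(S') \setminus S$ are allowed to participate in the extension — equivalently, from re-running an internal color-coding/partitioning routine of the degenerate construction once per modulator guess, so that the covering property is preserved not just on $G_{S'}$ in isolation but after re-gluing $S'$ and its neighborhood back in. Pinning down exactly where this $2^{O(kd)}$ term is unavoidable, versus where it is slack, would be the last piece I would address.
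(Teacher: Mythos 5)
This statement is quoted verbatim from Lokshtanov et al.\ (it carries their citation) and is not proved anywhere in this paper, so there is no in-paper proof to compare your attempt against. That said, your blind derivation is correct as a reduction of Theorem~\ref{thm-family-mod} to Theorem~\ref{thm-family-degenerate}: for each of the $2^{|S|}$ guesses $S' = I \cap S$ that happen to be independent and of size at most $k$, the residual graph $G_{S'} = G - (S \cup N_G(S'))$ is an induced subgraph of $G - S$ and hence $d$-degenerate; $I \setminus S$ avoids $N_G(S')$ by independence of $I$, lives entirely in $G_{S'}$, and has size at most $k - |S'|$; and each output set $S' \cup J$ is independent because $J$ avoids $N_G(S')$ by construction. (One small boundary case: when $|S'| = k$ you cannot invoke Theorem~\ref{thm-family-degenerate} with parameter $0$ since it requires $k \ge 1$; just add $\{S'\}$ directly.) Your observation about the mysterious $2^{O(kd)}$ factor is also accurate: it does \emph{not} arise from your black-box reduction, which only yields $O(2^{|S|} (kd)^{O(k)} \log n)$, strictly subsumed by the claimed bound. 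The extra factor is an artifact of how Lokshtanov et al.\ state and prove the modulator variant internally --- they re-run their own degenerate-graph construction (the one with the $\binom{k(d+1)}{k} \cdot 2^{o(k(d+1))}$ size profile) rather than citing the $(kd)^{O(k)}$ version as a black box, and the quoted statement in this paper simply reproduces that bound. So the term you could not account for is genuine slack relative to your argument, not a missing ingredient.
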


\begin{theorem}[\cite{DBLP:journals/talg/LokshtanovPSSZ20}]\label{thm-family-nowhere}
Let $G$ be a graph such that $G \in \mathcal{G}$, where $\mathcal{G}$ is a class of nowhere dense graphs. 
There is a deterministic algorithm that given $k \geq 1$, runs in time
$O(f_{\mathcal{G}}(k) \cdot (n + m) \log n)$, and outputs an independence covering family for $(G, k)$ of size at most $O(g_{\mathcal{G}}(k) \cdot n \log n)$, 
where $f_{\mathcal{G}}(k)$ and $g_{\mathcal{G}}(k)$ depend on $k$ and the class $\mathcal{G}$ but are independent of the size of the graph. 
\end{theorem}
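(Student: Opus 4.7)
The plan is to prove Theorem~\ref{thm-family-nowhere} by combining the structural characterization of nowhere dense classes via \emph{uniform quasi-wideness} with a recursive covering construction analogous to the one used in Theorem~\ref{thm-family-degenerate}. Recall that $\mathcal{G}$ is nowhere dense if and only if there exist class-dependent functions $N_{\mathcal{G}}(r,m)$ and $s_{\mathcal{G}}(r)$ such that for every $G \in \mathcal{G}$ and every $A \subseteq V(G)$ with $|A| \geq N_{\mathcal{G}}(r,m)$, one can find a set $S \subseteq V(G)$ with $|S| \leq s_{\mathcal{G}}(r)$ and a subset $B \subseteq A \setminus S$ with $|B| \geq m$ whose vertices are pairwise at distance strictly greater than $r$ in $G - S$. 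An algorithm that outputs such a sparsifier $S$ and scattered set $B$ in FPT time is due to Pilipczuk, Siebertz, and Toru\'nczyk, and I would take this as a black box.

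The construction proceeds by induction on $k$. The base case $k = 1$ simply returns $\{\{v\} : v \in V(G)\}$. For the inductive step, the key observation is that any independent set $I$ of size $k$, viewed as a vertex subset, admits via quasi-wideness (applied with radius $r = 2$ and $m$ a suitable function of $k$) a sparsifier $S$ of size at most $s_{\mathcal{G}}(2)$ such that $I \setminus S$ is $2$-scattered in $G - S$. I would iterate over a carefully chosen pool of candidate sparsifiers $S$ and, for each such $S$ and each independent subset $I_S \subseteq S$, recursively compute a covering family for the ``cleaned'' graph $\bigl(G - N[I_S] - (S \setminus I_S),\, k - |I_S|\bigr)$ and extend every set in that family by $I_S$. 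The union over all choices of $(S, I_S)$ yields $\mathcal{F}(G, k)$; correctness follows because writing $I = I_S \cup (I \setminus S)$, the residual $I \setminus S$ is an independent set in the cleaned graph.

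To achieve the size bound $g_{\mathcal{G}}(k) \cdot n \log n$, brute-force enumeration of sparsifiers must be replaced by a splitter / perfect-hash-family derandomization, producing $O(n \log n)$ candidate ``separator templates'' collectively capturing every relevant sparsifier up to a $k$-dependent multiplicative overhead. The $n \log n$ factor then propagates through the $k$ recursion levels with multiplicity depending only on $k$ and $\mathcal{G}$, matching the shape of the bound in Theorem~\ref{thm-family-degenerate}.

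The main obstacle is establishing correctness precisely: one must verify that every independent set $I$ with $|I| \leq k$ is contained in some member of the constructed family. The delicate case is when $|I|$ is smaller than the quasi-wideness threshold $N_{\mathcal{G}}(2, k)$, so that quasi-wideness does not directly apply to $I$ itself; here one must augment $I$ to meet the threshold or treat small instances by direct enumeration, both of which contribute to the $k$-dependent overhead. Reconciling the derandomization with the recursion so that the resulting family size depends only on $k$ and the class $\mathcal{G}$ (times $n \log n$) is where I would expect to spend the most care.
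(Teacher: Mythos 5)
Theorem~\ref{thm-family-nowhere} is not proved in this paper; it is imported verbatim from Lokshtanov, Panolan, Saurabh, Siebertz, and Zehavi~\cite{DBLP:journals/talg/LokshtanovPSSZ20}, so there is no in-paper argument against which to compare your sketch. Your identification of uniform quasi-wideness as the governing structural tool is sound and matches the cited work, but the central step misapplies it. As you yourself state it, quasi-wideness guarantees, for $A$ with $|A| \geq N_{\mathcal{G}}(r,m)$, a set $S$ with $|S| \leq s_{\mathcal{G}}(r)$ and a \emph{subset} $B \subseteq A \setminus S$ of size at least $m$ that is $r$-scattered in $G - S$; it does not make the entire residual $A \setminus S$ scattered. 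Taking $A = I$ of size $k$, one obtains a scattered $B$ of size $m$ only when $k \geq N_{\mathcal{G}}(2,m)$, and typically $m$ is far smaller than $k$, so most of $I \setminus S$ can remain clustered. Your claim that ``$I \setminus S$ is $2$-scattered in $G - S$'' therefore does not follow, and both your correctness argument and the design of the recursion rest on it.

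There are two further gaps. When the hidden independent set $I$ happens to satisfy $I \cap S = \emptyset$ for the guessed sparsifier, the recursive call is on $(G - S, k)$ with $k$ unchanged, so the induction on $k$ does not advance and the claimed $k$-depth recursion collapses. And the target bound $O(g_{\mathcal{G}}(k)\cdot n\log n)$ carries an explicit linear factor in $n$, in contrast to the $\log n$-only dependence of Theorem~\ref{thm-family-degenerate}; you attribute this to a splitter / perfect-hash-family derandomization producing $O(n\log n)$ separator templates, but such derandomization gadgets have size $\mathrm{poly}(k)\cdot\log n$, not $n\log n$, so this neither explains where the extra $n$ factor comes from nor why it should be necessary.
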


We use Theorems~\ref{thm-family-degenerate} and~\ref{thm-family-mod} to design fixed-parameter tractable algorithms 
for parameters $k + \ell + d$ and $|N| + k + \ell + d$, respectively.
Our algorithm for parameter $|M| + \ell + \Delta$ is based on the random separation technique~\cite{DBLP:conf/iwpec/CaiCC06}. 
Finally, we show that using independence covering families one can obtain a simpler 
and unified algorithm for the standard \textsc{Token Jumping Reachability} problem (a.k.a. \textsc{Token Jumping}) parameterized by $k$ on both 
degenerate and nowhere dense classes of graphs; this is in contrast to the algorithms presented in~\cite{DBLP:journals/jcss/LokshtanovMPRS18}. 
To do so, we make use of Theorems~\ref{thm-family-degenerate} and~\ref{thm-family-nowhere}. 
Note that the major difference between Theorems~\ref{thm-family-degenerate} and~\ref{thm-family-nowhere} is that in the former we are
guaranteed a family of size at most $O((kd)^{O(k)} \cdot \log n)$ while in the latter the family is of size at least $O(g_{\mathcal{G}}(k) \cdot n \log n)$, i.e., we have 
an extra linear dependence on $n$. This difference is the reason why our algorithm for parameter $k + \ell + d$ cannot be adapted to work for nowhere dense graphs. 
The current complexity status of all problems considered in this work is summarized in Table~\ref{tab:complexity-status}. 

\begin{table}[h]
    \centering
    \caption{Parameterized complexity status of the reachability and optimization variants of \textsc{Token Sliding} and \textsc{Token Jumping}. 
	Results proved in this paper are shown in bold.}\label{tab:complexity-status}
    \begin{tabular}{|c|c|c|c|}
    \hline
    \textbf{} & Degenerate & Nowhere dense & Biclique free \\
    \hline
    \textsc{TSR} parameterized by $k$ & Open & Open & Open \\
    \hline
    \textsc{TSO} parameterized by $k$ & Open & Open & Open \\
    \hline
    \textsc{TSO} parameterized by $\ell$ & \textbf{W[1]-hard} & FPT & \textbf{W[1]-hard} \\
    \hline
    \textsc{TSO} parameterized by $k + \ell$ & \textbf{FPT} & FPT & Open \\
    \hline
    \textsc{TJR} parameterized by $k$ & FPT & FPT & FPT \\
    \hline
    \textsc{TJO} parameterized by $k$ & Open & Open & Open \\
    \hline
    \textsc{TJO} parameterized by $\ell$ & \textbf{W[1]-hard} & FPT & \textbf{W[1]-hard} \\
    \hline
    \textsc{TJO} parameterized by $k + \ell$ & \textbf{FPT} & FPT & Open \\
    \hline
    \end{tabular}
\end{table}

The rest of the paper is organized as follows. In Section~\ref{sec:prelims} we introduce required background and terminology. 
In Section~\ref{sec:fpt-ellk-ts} we present our main positive results which are the fixed-parameter tractable algorithms for \textsc{TSO} and \textsc{TJO} 
parameterized by $k + \ell + d$ on $d$-degenerate graphs and parameterized by $|N| + k + \ell + d$ on graphs having a 
modulator $N$ whose deletion leaves a $d$-degenerate graph (assuming $N$ is given as part of the input). 
Section~\ref{sec:fpt-modulator} is devoted to the fixed-parameter tractable algorithm for  
parameter $|M| + \ell + \Delta$ on graphs having a modulator $M$ whose deletion leaves a graph of maximum degree $\Delta$. 
We show hardness on $2$-degenerate graphs in Section~\ref{sec:hard-ts} for \textsc{TSO} and in Section~\ref{sec:hard-tj} for \textsc{TJO}. 
We conlude in Section~\ref{sec:fpt-k-tj} where we present a unified algoritm for \textsc{Token Jumping Reachability} on graphs admitting 
efficiently computable independence covering families of the appropriate size.

\section{Preliminaries}\label{sec:prelims}

\subparagraph{Sets and functions.}
We denote the set of natural numbers (including $0$) by $\mathbb{N}$. For $n \in \mathbb{N}$, we use $[n]$ and $[n]_0$ to denote 
the sets $\{1,2,\cdots,n\}$ and $\{0,1,2,\cdots,n\}$, respectively. For a set $X$, we denote its power set by $2^{X} = \{X' \mid X' \subseteq X\}$. 
For a function $f:X\rightarrow Y$ and an element $y\in Y$, $f^{-1}(y)$ 
denotes the set $\{x\in X\mid f(x)=y\}$. For a non-empty set $X$, a family $\C{F} \subseteq 2^X$ is a {\em partition} of $X$, if i) for 
each $Y \in \C{F}$, $Y \neq \emptyset$, ii) for distinct $Y,Z \in \C{F}$, we have $Y \cap Z = \emptyset$, and iii) $\cup_{Y \in \C{F}} Y = X$.  
An observation that we will make use of is the following:
\begin{proposition}\label{prop-log-fpt}
For $k,n \in \mathbb{N}$, where $n \geq c$ for some constant $c$, we have $(\log n)^k \leq n + k^{2k}$.
\end{proposition}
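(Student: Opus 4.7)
The plan is to split into two cases according to whether $k^2$ exceeds $\log n$ or not; in each regime, one of the two summands on the right-hand side will dominate.

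In the case $k^2 > \log n$ the bound is immediate, since
\[
(\log n)^k < (k^2)^k = k^{2k} \leq n + k^{2k}.
\]

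In the opposite case $k^2 \leq \log n$, I would aim for the stronger inequality $(\log n)^k \leq n$. Taking base-$2$ logarithms, this is equivalent to $k \log \log n \leq \log n$. Setting $x := \log n$, the case hypothesis yields $x \geq k^2$, hence $k \leq \sqrt{x}$, so that $k \log x \leq \sqrt{x} \cdot \log x$. The claim then reduces to the elementary estimate $\log x \leq \sqrt{x}$, which holds for all sufficiently large $x$.

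The entire argument is routine arithmetic; the only point requiring care is choosing the threshold constant $c$ in the statement large enough that $n \geq c$ forces $\log n$ into the regime where $\log x \leq \sqrt{x}$ (for instance, $\log n \geq 16$ suffices). No further tools are needed, and there is no real obstacle to overcome.
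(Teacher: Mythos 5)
Your proof is correct, and it follows the same high-level plan as the paper's — split into two regimes according to which of the two summands on the right-hand side dominates — but with a different and arguably cleaner threshold. You split on $k^{2} \lessgtr \log n$, which makes the branch $k^{2} > \log n$ completely trivial (just raise both sides to the $k$-th power), and concentrates all the work in the branch $k^{2} \le \log n$, where you reduce to $\log x \le \sqrt{x}$ for $x = \log n$. The paper instead splits on $k \lessgtr \log n / \log\log n$: its branch $k \le \log n/\log\log n$ is the trivial one (exponentiate to get $(\log n)^k \le n$), and the work is pushed into the branch $k > \log n/\log\log n$, where it deduces $\log n \le k^{2}$. Both proofs ultimately rest on the same elementary estimate $\log x \le \sqrt{x}$ (equivalently $\log\log n \le \sqrt{\log n}$) for $n$ above a fixed constant; you invoke it explicitly, while the paper uses it implicitly to rule out $\sqrt{\log n} < \log\log n$. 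One small point in the paper's favor of your version: the paper's write-up of the nontrivial branch states ``we claim $\log\log n < k$'' and then concludes ``hence $\log n \le k^{2}$,'' a step that does not follow as written (the intermediate claim should be $\sqrt{\log n} < k$, not $\log\log n < k$); your decomposition sidesteps this entirely and is easier to verify. You also correctly flag the need to choose the constant $c$ large enough (e.g., $\log n \ge 16$), which matches the paper's hypothesis $n \ge c$.
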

\begin{proof}
We distinguish between two cases:
\begin{itemize}
\item If $k \leq \log n / \log \log n$ then we have $k \log \log n \leq \log n$. Raising both sides to the power of $2$, we obtain $(\log n)^k \leq n$.
\item If $k > \log n / \log \log n$ then we claim that $\log \log n < k$; suppose otherwise. 
Then we must have, for all $n$, $\log n < k \log \log n \leq (\log \log n)^2$ which is false.
Hence, we have $\log n \leq k^2$ and we get $(\log n)^k \leq k^{2k}$.  
\end{itemize}
Combining the two inequalities we get $(\log n)^k \leq n + k^{2k}$.
\end{proof}

\subparagraph{Graphs and graph classes.}
We assume that each graph $G$ is a
simple, undirected graph with vertex set $V(G)$ and
edge set $E(G)$, where $|V(G)| = n$ and $|E(G)| = m$.
The {\em open neighborhood}, or simply {\em neighborhood}, of a
vertex $v$ is denoted by $N_G(v) = \{u \mid \{u,v\} \in E(G)\}$, the
{\em closed neighborhood} by $N_G[v] = N_G(v) \cup \{v\}$. Similarly, for a set of vertices $S \subseteq V(G)$,
we define $N_G(S) = \{v \mid \{u,v\} \in E(G), u \in S, v \not\in S \}$ and $N_G[S] = N_G(S) \cup S$.
The {\em degree} of a vertex is $|N_G(v)|$.
We drop the subscript $G$ when clear from context.
A {\em subgraph} of $G$ is a graph $G'$
such that $V(G') \subseteq V(G)$ and $E(G') \subseteq E(G)$.
The {\em induced subgraph} of $G$ with respect to $S \subseteq V(G)$ is denoted by $G[S]$;
$G[S]$ has vertex set $S$ and edge set $E(G[S]) = \{\{u,v\} \in E(G) \mid u, v \in S\}$.

{\em Contracting} an edge $\{u,v\}$ of $G$ results in a new graph $H$ in which
the vertices $u$ and $v$ are deleted and replaced by a new vertex $w$ that is adjacent
to $(N_G(u) \cup N_G(v)) \setminus \{u, v\}$.
If a graph $H$ can be obtained from $G$ by repeatedly
contracting edges, $H$ is said to be a {\em contraction} of $G$.
If $H$ is a subgraph of a contraction of $G$, then $H$ is said
to be a {\em minor} of $G$, denoted by $H \preceq_m G$.
The class of nowhere dense graphs~\cite{DBLP:journals/ejc/NesetrilM08,DBLP:journals/ejc/NesetrilM11a} is a common generalization of proper minor
closed classes, classes of graphs with bounded degree, graph classes locally excluding a fixed graph $H$ as a minor and classes of bounded expansion. 
In order to formally define the class of nowhere dense graphs, we need a few additional definitions.

\begin{definition}
A graph $H$ is an \emph{$r$-shallow minor} of $G$, where $r$ is an integer, if there exists a set of disjoint subsets 
$V_1, \ldots, V_{|H|}$ of $V(G)$ such that 
\begin{enumerate}
\item each graph $G[V_i]$ is connected and has radius at most $r$, and
\item there is a bijection $\psi: V(H) \rightarrow \{V_1, \ldots, V_{|H|}\}$ such that for every edge $\{u,v\} \in E(H)$ there
is an edge in $G$ with one endpoint in $\psi(u)$ and the second in $\psi(v)$.
\end{enumerate}
The set of all $r$-shallow minors of a graph $G$ is denoted by $G \triangledown r$. Similarly, the set of all
$r$-shallow minors of all the members of a graph class $\mathscr{C}$ is denoted by $\mathscr{C} \triangledown r = \bigcup_{G \in \mathscr{C}}(G \triangledown r)$.
\end{definition}

\noindent
let $\omega(G)$ denotes the size of the largest clique in $G$ and $\omega(\mathscr{C}) = sup_{G \in \mathscr{C}}(\omega(G))$.

\begin{definition}\label{def:nowhere-dense}
A class of graphs $\mathscr{C}$ is said to be {\em nowhere dense} if there exists a function
$f_\omega : \mathbb{N} \rightarrow \mathbb{N}$ such that for all $r$ we have that $\omega(\mathscr{C} \triangledown r) \leq f_\omega(r)$.
\end{definition}

Nowhere density turns out to be a very robust concept with several
natural characterizations and applications (see, e.g.,~\cite{DBLP:conf/soda/KreutzerRS17}).

\begin{definition}
A class of graphs $\mathscr{C}$ is said to be {\em $d$-degenerate} if
every induced subgraph of any graph $G \in \mathscr{C}$ has a vertex of degree at most $d$. 
$\mathscr{C}$ is said to be {\em degenerate} if it is $d$-degenerate for some $d$.
\end{definition}

Graphs of bounded degeneracy and nowhere dense graphs are incomparable~\cite{DBLP:journals/jacm/GroheKS17}.
In other words, graphs of bounded degeneracy are somewhere dense.
Degeneracy is a hereditary property, hence an induced subgraph of
a $d$-degenerate graph is also $d$-degenerate.
It is well-known that graphs of treewidth at most $d$ are also $d$-degenerate.
Moreover a $d$-degenerate graph cannot
contain $K_{d+1,d+1}$ as a subgraph, which brings us to the
class of biclique-free graphs.
The relationship between bounded degeneracy, nowhere dense,
and $K_{d,d}$-free graphs was shown by Philip et al. and Telle and Villanger~\cite{DBLP:journals/talg/PhilipRS12,DBLP:conf/esa/TelleV12}.

\begin{definition}
A class of graphs $\mathscr{C}$ is said to be {\em $d$-biclique-free}, for some $d > 0$,
if $K_{d,d}$ is not a subgraph of any $G \in \mathscr{C}$. $\mathscr{C}$ is
said to be {\em biclique-free} if it is $d$-biclique-free for some $d$.
\end{definition}

\begin{proposition}[~\cite{DBLP:journals/talg/PhilipRS12,DBLP:conf/esa/TelleV12}]
Any degenerate or nowhere dense class of graphs is biclique-free but not vice-versa.
\end{proposition}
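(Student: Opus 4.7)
The plan is to prove the forward direction (degenerate or nowhere dense implies biclique-free) with an explicit quantitative bound on the biclique-free parameter, and then dispose of the converse by pointing to a standard counterexample class.

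The degenerate case is essentially the remark immediately preceding the proposition: if $\mathscr{C}$ is $d$-degenerate and some $G \in \mathscr{C}$ contained $K_{d+1,d+1}$ as a subgraph, then the induced subgraph of $G$ on the $2(d+1)$ vertices of this copy would already contain $K_{d+1,d+1}$, so every vertex in that induced subgraph would have degree at least $d+1$, contradicting the hereditary $d$-degeneracy property. Hence $\mathscr{C}$ is $(d+1)$-biclique-free. For the nowhere dense direction, the key step is the observation that $K_{d,d} \subseteq G$ implies $K_d \in G \triangledown 1$: pair up the two sides of the $K_{d,d}$ via a perfect matching and let each matched edge form a branch set $V_i$; each $V_i$ has two vertices and induces an edge, hence has radius $1$, and for $i \neq j$ the non-matching edge $\{a_i, b_j\}$ of the $K_{d,d}$ witnesses adjacency between $V_i$ and $V_j$ in the resulting $1$-shallow minor. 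Instantiating Definition~\ref{def:nowhere-dense} at $r = 1$ then yields $d \leq \omega(\mathscr{C} \triangledown 1) \leq f_\omega(1)$, so every $G \in \mathscr{C}$ avoids $K_{f_\omega(1)+1,f_\omega(1)+1}$ as a subgraph and the class is biclique-free.

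For the converse, I would exhibit a biclique-free class that is neither degenerate nor nowhere dense. A standard witness is the family of incidence graphs of finite projective planes, which are $K_{2,2}$-free yet have average degree $\Theta(\sqrt{n})$ and are known to be somewhere dense (hence not nowhere dense, and certainly not $d$-degenerate for any fixed $d$). The main obstacle in a self-contained write-up is verifying somewhere-density for this family, since it requires extracting a dense shallow minor from a bounded-average-degree $K_{2,2}$-free graph; within the scope of this paper I would simply defer to the references~\cite{DBLP:journals/talg/PhilipRS12,DBLP:conf/esa/TelleV12}, whose constructions make the separation explicit.
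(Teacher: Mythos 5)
The paper does not prove this proposition; it states it with a citation to~\cite{DBLP:journals/talg/PhilipRS12,DBLP:conf/esa/TelleV12}, so there is no in-paper argument to compare against. Your proof of the forward direction is correct and self-contained. The degenerate case repackages the remark the paper already makes just above the proposition (a $d$-degenerate graph cannot contain $K_{d+1,d+1}$, by hereditariness and the minimum-degree condition). The nowhere dense case is the standard argument: contracting a perfect matching inside a $K_{d,d}$-subgraph produces $K_d$ as a $1$-shallow minor (each branch set $\{a_i,b_i\}$ induces an edge, hence has radius $1$, and the cross edges $\{a_i,b_j\}$ witness the required adjacencies), so $d \leq \omega(\mathscr{C} \triangledown 1) \leq f_\omega(1)$ by Definition~\ref{def:nowhere-dense}, and the class is $(f_\omega(1)+1)$-biclique-free. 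For the converse you rightly defer to the references: ruling out degeneracy for incidence graphs of projective planes is immediate from the $\Theta(n^{3/2})$ edge count, but verifying somewhere-density genuinely requires extracting dense shallow minors (or invoking the edge-density characterization of nowhere dense classes), and it is appropriate not to attempt that inline.
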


\section{FPT algorithm for parameter $k + \ell + d$}\label{sec:fpt-ellk-ts}
In this section we start by designing a fixed-parameter tractable algorithm for the \textsc{TSO} problem parameterized by $k + \ell + d$ on $d$-degenerate graphs. 
We then show how the algorithm can be adapted for \textsc{TJO} as well as for parameter $|N| + k + \ell + d$ on graphs 
having a modulator $N$ whose deletion leaves a $d$-degenerate graph (assuming $N$ is given as part of the input). 

We let $(G, S, T, k, \ell)$ denote an instance of \textsc{TSO}, where $G$ is $d$-degenerate. 
Moreover, we assume that we have computed in time $O((kd)^{O(k)} \cdot (n + m) \log n)$ an independence covering family 
$\mathcal{F}(G, k)$ for $(G, k)$ of size at most $O((kd)^{O(k)} \cdot \log n)$ (Theorem~\ref{thm-family-degenerate}). 
Without loss of generality, we assume that both $S$ and $T$ belong to $\mathcal{F}(G, k)$; as otherwise we can simply add them. 
Note that if $(G, S, T, k, \ell)$ is a yes-instance then there exists a sequence $\langle I_0, I_1, ..., I_\ell \rangle$ of independent sets such that for 
all $j \in \{0,\ldots,\ell-1\}$ the set $I_j$ is an independent set of size~$k$ in $G$, $I_0 = S$, $I_\ell = T$, and $I_{j+1}$ is obtained from $I_j$ by a token slide. 
This implies that there exists a sequence $\langle J_0, J_1, ..., J_\ell \rangle$ of elements of $\mathcal{F}(G, k)$ such that $J_0 = S$, $J_\ell = T$, and for 
$j \in \{1,\ldots,\ell-1\}$ we have $I_j \subseteq J_j$. In what follows, we assume that we guessed a sequence  $\langle J_0, J_1, ..., J_\ell \rangle$ of elements 
of $\mathcal{F}(G, k)$ such that $J_0 = S$ and $J_\ell = T$. Our goal now is to design an algorithm that either finds a reconfiguration sequence 
$\langle I_0 = S, I_1 \subseteq J_1, ...,I_{\ell - 1} \subseteq J_{\ell-1}, I_\ell = T \rangle$ or determine that no such sequence exists. 

We define a constraint as a pair $(X,b)$ where $X \subseteq V(G)$ and $b$ is a positive integer, called the budget of $X$. 
We denote a set of constraints by $C = \{(X,b), \ldots\}$. 
We say that a constraint $(X,b)$ is satisfied (by $Z$) if $|Z \cap X| = b$, where $Z \subseteq V(G)$. 
We say that a set of constraints $C$ is satisfied (by $Z$) if all pairs $(X,b) \in C$ are satisfied (by $Z$). We denote a set of sets of constraints by $\C{C}$.
We now proceed by building sets of sets of constraints $\C{C}_0$, $\C{C}_1$, $\ldots$, $\C{C}_\ell$ and show that for each $i \in [\ell]_0$, 
the following invariants are satisfied:

\begin{itemize}
\item \textbf{Correctness Invariant I:} If a $k$-sized set $Z \subseteq J_i$ satisfies at least one set of constraints in $\C{C}_i$, then $Z$ is reachable from $S = J_0$. 
\item \textbf{Correctness Invariant II:} For any $k$-sized set $Z \subseteq J_i$, if there is a reconfiguration 
sequence $S = I_0, I_1, I_2, \ldots, I_{i} = Z$, where for each $p \in [i]_0$, $I_p \subseteq J_p$, then $Z$ satisfies at least one set of constraints in $\C{C}_i$. 
\item \textbf{Size Invariant:} The total number of constraints at the $i^{th}$ step is $\sum_{C \in \C{C}_i} |C| \leq (i + 1)!$. 
\end{itemize}

At the base case, we let $\C{C}_0 = \{\{(S,k)\}\}$. The correctness of the base case immediately follows from its construction. 
We now proceed recursively as follows. Consider $i \in [\ell]$. We assume that for each $p \in [i-1]$, we have computed $\C{C}_p$ that 
satisfy the correctness and size invariants. Initialize $\C{C}_i = \emptyset$.

\subparagraph{}
For each $C \in \C{C}_{i-1}$ \par
    For each constraint $(X,b) \in C$
        \begin{enumerate}
            \item Initialize a constraint set $C' = \emptyset$;
            \item If $b = 1$
            \begin{enumerate}
                \item Add $(N(X) \cap J_i,1)$ to $C'$;
                \item Add $(X' \cap J_i, b')$ for all other constraints $(X',b') \in C$ to $C'$;
            \end{enumerate}
            \item Else
            \begin{enumerate}
                \item Add $(X \cap J_i,b-1)$ to $C'$;
                \item Add $(N(X) \cap J_i,1)$ to $C'$;
                \item Add $(X' \cap J_i, b')$ for all other constraints $(X',b') \in C$ to $C'$;
            \end{enumerate}
            \item Add $C'$ to $\C{C}_i$;
        \end{enumerate}

\begin{lemma}\label{lem-subsets}
    For every $C \in \C{C}_i$, $\bigcup_{(X,b) \in C} X \subseteq J_i$.
\end{lemma}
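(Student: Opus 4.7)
The plan is to proceed by induction on $i \in [\ell]_0$, mirroring the recursive construction of the sets of constraint sets $\C{C}_i$.

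For the base case $i = 0$, the construction yields $\C{C}_0 = \{\{(S,k)\}\}$, and since we have explicitly set $J_0 = S$, the only constraint set $C = \{(S,k)\}$ satisfies $\bigcup_{(X,b) \in C} X = S = J_0 \subseteq J_0$, as required.

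For the inductive step, assume the claim holds for $i - 1$ and consider any $C \in \C{C}_i$. By the recursive procedure, $C$ arises as the set $C'$ produced from some constraint set $C'' \in \C{C}_{i-1}$ and some chosen constraint $(X,b) \in C''$. Inspecting each of the four places where a pair is added to $C'$, every constraint placed in $C'$ is of the form $(Y \cap J_i, \cdot)$ for some set $Y \subseteq V(G)$ (namely $Y = N(X)$, $Y = X$, or $Y = X'$ for another constraint $(X', b') \in C''$). Consequently each first coordinate of a pair in $C'$ is a subset of $J_i$, and taking the union over all pairs in $C = C'$ still yields a subset of $J_i$.

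I do not anticipate any real obstacle here: the statement is essentially a syntactic invariant of the construction. The only thing to be a little careful about is that no constraint in $C'$ is introduced without the explicit intersection with $J_i$, and indeed reading the pseudocode confirms this in every branch (both the $b = 1$ case and the $b > 1$ case). Thus the invariant is preserved at each recursive step, completing the induction.
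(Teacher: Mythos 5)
Your proof is correct and follows the same inductive structure as the paper's own argument: both verify the base case $\C{C}_0 = \{\{(S,k)\}\}$ with $J_0 = S$ and observe that every constraint added in the $i$-th step has its vertex set explicitly intersected with $J_i$. Your additional remark that the induction hypothesis is essentially unnecessary (since each branch of the construction already intersects with $J_i$) is accurate and slightly sharper, but the overall route is the same.
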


\begin{proof}
We use induction to prove the lemma. For the base case, $i = 0$, we have $\C{C}_0 = \{\{(S,k)\}\}$. 
For $C = \{(S,k)\}$, we can see that the lemma holds. 
For the inductive step, we assume that the lemma holds true for $i - 1$ and prove that it still holds for $i$. 
So, for all $C \in \C{C}_{i-1}$, $\cup_{(X,b) \in C} X \subseteq J_{i-1}$.
In the $i^{th}$ step of the algorithm, we add new sets of constraints $C'$ such that all the constraints $(Y,\beta) \in C'$ have $Y \subseteq J_i$. 
Hence, their union must be a subset of $J_i$. This completes the proof of the lemma.
\end{proof}

\begin{lemma}\label{lem-budgets}
    For every $C \in \C{C}_i$, $\sum_{(X,b) \in C} b = k$.
\end{lemma}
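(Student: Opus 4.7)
The plan is a straightforward induction on $i$, tracking how the total budget evolves as we move from $\mathcal{C}_{i-1}$ to $\mathcal{C}_i$. For the base case, $\mathcal{C}_0 = \{\{(S,k)\}\}$ contains a single constraint with budget $k$, so the claim is immediate.

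For the inductive step, I would fix an arbitrary $C' \in \mathcal{C}_i$ and recall that, by construction, $C'$ arises from some $C \in \mathcal{C}_{i-1}$ and some selected constraint $(X,b) \in C$. The induction hypothesis gives $\sum_{(Y,c) \in C} c = k$. The argument then splits into the two cases of the recursive procedure. If $b=1$, then $C'$ consists of $(N(X) \cap J_i, 1)$ together with $(X' \cap J_i, b')$ for every other $(X',b') \in C$; the budget contributions sum to $1 + \sum_{(X',b') \in C \setminus \{(X,b)\}} b' = 1 + (k - 1) = k$. If $b > 1$, then $C'$ additionally contains $(X \cap J_i, b-1)$, so the total becomes $(b - 1) + 1 + (k - b) = k$. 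In either case, the total budget is preserved.

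I do not anticipate any real obstacle here: the only thing to check is bookkeeping, namely that the construction always replaces a single constraint of budget $b$ with either one constraint of budget $1$ (when $b=1$) or two constraints of budgets $b-1$ and $1$ (when $b>1$), while keeping all other constraints intact. Both operations preserve the sum, so the invariant propagates from $\mathcal{C}_{i-1}$ to $\mathcal{C}_i$, completing the induction.
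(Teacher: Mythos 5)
Your proof is correct and follows essentially the same inductive argument as the paper's: base case $\mathcal{C}_0 = \{\{(S,k)\}\}$, and the inductive step observing that the construction replaces one constraint of budget $b$ by constraints summing to $b$ (either $1$ when $b=1$, or $(b-1)+1$ otherwise) while leaving the rest untouched. Your version just makes the arithmetic a bit more explicit; no substantive difference.
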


\begin{proof}
We use induction to prove the lemma. For the base case, $i = 0$, we have $\C{C}_0 = \{\{(S,k)\}\}$. For $C = \{(S,k)\}$, we can see that the lemma holds.
For the inductive step, we assume that the lemma holds true for $i-1$, and prove it for $i$. So, for all $C \in \C{C}_{i-1}$, $\sum_{(X,b) \in C} b = k$.
In the $i^{th}$ recursive step of the algorithm, we add a new set of constraints $C'$ corresponding to each constraint $(X,b)$ contained in 
some member of $\C{C}_{i-1}$. If $b$ is $1$, we add another constraint with budget $1$. Otherwise, we split the budget in the previous budget, i.e. $b$, 
into two parts $b-1$ and $1$. The total budget still remains the same as the ${i-1}^{th}$ step, i.e., $k$. This completes the proof of the lemma.
\end{proof}

\begin{lemma}\label{lem-disjoint}
    For every $C \in \C{C}_i$, all the vertex subsets which are part of the constraints in $C$ are pairwise disjoint.
\end{lemma}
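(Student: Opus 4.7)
My plan is to prove the lemma by induction on $i$, paralleling the inductive approach of Lemmas~\ref{lem-subsets} and~\ref{lem-budgets}. The base case $i=0$ is immediate, since $\mathcal{C}_0 = \{\{(S,k)\}\}$ contains a single constraint and hence no pair to check.

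For the inductive step, I fix some $C' \in \mathcal{C}_i$, produced from a constraint $(X,b) \in C \in \mathcal{C}_{i-1}$. By the construction, the vertex sets appearing in $C'$ are exactly $N(X) \cap J_i$, the set $X \cap J_i$ (which appears only when $b > 1$), and the sets $X'' \cap J_i$ for each remaining $(X'', b'') \in C$ with $X'' \neq X$. Pairwise disjointness among the sets inherited from $C$ (both $X \cap J_i$ vs.\ $X'' \cap J_i$ and $X'' \cap J_i$ vs.\ $X''' \cap J_i$) follows immediately from the inductive hypothesis applied to $C$, since intersecting each vertex set with $J_i$ preserves disjointness. Moreover, $X \cap J_i$ is disjoint from $N(X) \cap J_i$ because by the definition of the open neighborhood used in the preliminaries, $N(X) \cap X = \emptyset$.

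The only remaining case is showing that $N(X) \cap J_i$ is disjoint from $X'' \cap J_i$ for every other $(X'', b'') \in C$. Here I would invoke Lemma~\ref{lem-subsets} applied to $\mathcal{C}_{i-1}$, which tells us that both $X$ and $X''$ are subsets of $J_{i-1}$. Since $J_{i-1}$ is an independent set of $G$ and $X \subseteq J_{i-1}$, no vertex of $J_{i-1}$ can be a neighbor of any vertex of $X$; equivalently, $N(X) \cap J_{i-1} = \emptyset$. As $X'' \subseteq J_{i-1}$, this yields $N(X) \cap X'' = \emptyset$, and intersecting further with $J_i$ preserves this. This independence property of the $J_i$'s is the only substantive input into the argument and is the step I expect to be the real content of the proof; everything else reduces to straightforward set algebra and the inductive hypothesis.
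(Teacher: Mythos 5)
Your proof is correct and follows essentially the same inductive strategy as the paper: you split the disjointness checks into pairs inherited from the previous constraint set (covered by the inductive hypothesis) and pairs involving the newly created $N(X)\cap J_i$, which you handle by combining Lemma~\ref{lem-subsets} with the independence of $J_{i-1}$. This is precisely the argument the paper gives (and your version actually states it a bit more carefully, invoking $J_{i-1}$ rather than the paper's slightly imprecise reference to $J_i$).
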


\begin{proof}
We use induction to prove the lemma.
For the base case, $i = 0$, we have $\C{C}_0 = \{\{(S,k)\}\}$. For $C = \{(S,k)\}$, we can see that the lemma holds trivially; since $|C| = 1$. 
For $i = 1$, we have $\C{C}_1 = \{\{(S \cap J_1,k-1),(N(S) \cap J_1,1)\}\}$. For $C = \{(S \cap J_1,k-1),(N(S) \cap J_1,1)\}$, we can see that 
$(S \cap J_1) \cap (N(S) \cap J_1) = \emptyset$ and the lemma holds; since $J_1$ is an independent set.
For the inductive step, we assume that the lemma holds true for $i-1$, and prove it for $i$. So, for all $C \in \C{C}_{i-1}$, all the vertex subsets 
which are part of the constraints in $C$ are pairwise disjoint.
In the $i^{th}$ step of the algorithm, we add a new set of constraints $C'$ corresponding to each constraint $(X,b)$ contained 
in some member of $\C{C}_{i-1}$, say $C$. The sets $X' \cap J_i$ added corresponding to all constraints $(X',b') \in C$ such that $(X',b') \neq (X,b)$ are pairwise 
disjoint by the induction hypothesis. If $b > 1$, the set $X \cap J_i$ added is disjoint with all the sets $X' \cap J_i$ such 
that $(X',b') \in C$ and $(X',b') \neq (X,b)$ by the induction hypothesis. 
The set $N(X) \cap J_i$ added is disjoint with $X' \cap J_i$ for all $(X',b') \in C$ because all sets $X'$ are part of an independent 
set $J_i$ by Lemma 3.1, and none of them can have their neighbourhoods intersecting with the other sets. This completes the proof of the lemma.
\end{proof}

\begin{lemma}[Size Invariant]\label{lem-size-inv}
The total number of constraints at the $i^{th}$ step is $c_i = \sum_{C \in \C{C}_i} |C| \leq (i + 1)!$. Therefore, $c_\ell \leq (\ell+1)!$.
\end{lemma}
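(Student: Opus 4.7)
The plan is to prove the recurrence $c_i \leq (i+1)\cdot c_{i-1}$ and then iterate from the base case $c_0 = 1$ to obtain $c_i \leq (i+1)!$. The key auxiliary fact I will need is that the cardinality of any individual constraint set at step $i$ grows only linearly with $i$: namely, $|C| \leq i+1$ for every $C \in \C{C}_i$.

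First, I will establish this size bound by induction on $i$. The base case $i=0$ is immediate since $\C{C}_0 = \{\{(S,k)\}\}$. For the inductive step, inspect the recursive construction: each new $C' \in \C{C}_i$ is built from a single constraint $(X,b)$ of some $C \in \C{C}_{i-1}$ together with the remaining constraints of $C$. In the case $b=1$ the step replaces $(X,1)$ by $(N(X)\cap J_i, 1)$ and thus $|C'|=|C|$; in the case $b>1$ it replaces $(X,b)$ by the two constraints $(X\cap J_i, b-1)$ and $(N(X)\cap J_i, 1)$, so $|C'|=|C|+1$. Either way $|C'| \leq |C|+1 \leq (i-1+1)+1 = i+1$, completing the induction.

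Second, I bound $c_i$ directly using the construction. For each $C \in \C{C}_{i-1}$ and each $(X,b)\in C$, the algorithm produces exactly one $C'\in\C{C}_i$, and that $C'$ has size at most $|C|+1$. Summing over all sources,
\[
c_i \;=\; \sum_{C' \in \C{C}_i} |C'| \;\leq\; \sum_{C \in \C{C}_{i-1}} \sum_{(X,b)\in C} \bigl(|C|+1\bigr) \;=\; \sum_{C \in \C{C}_{i-1}} |C|\bigl(|C|+1\bigr) \;\leq\; (i+1)\sum_{C \in \C{C}_{i-1}} |C| \;=\; (i+1)\, c_{i-1},
\]
where the penultimate inequality uses the size bound $|C|+1 \leq i+1$ for $C\in\C{C}_{i-1}$ established above. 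Since $c_0 = 1 = 1!$, a straightforward induction then gives $c_i \leq (i+1)!$ for all $i \in [\ell]_0$, and in particular $c_\ell \leq (\ell+1)!$.

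The only real content is the size bound $|C| \leq i+1$; once that is in hand the rest is just bookkeeping. The temptation to avoid is using the cruder bound $|C| \leq k$ that follows from $\sum_{(X,b)\in C} b = k$ via Lemma~\ref{lem-budgets}, since that would only yield an exponential bound $c_i \leq (k+1)^i$ rather than the factorial bound actually needed for the downstream FPT running time.
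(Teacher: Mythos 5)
Your proof is correct and takes essentially the same approach as the paper's: you establish that every $C \in \C{C}_i$ has $|C| \leq i+1$ (each refinement step enlarges a constraint set by at most one), observe that exactly one new constraint set is generated per constraint $(X,b)$ of the previous round (so the count of new sets equals $c_{i-1}$), and combine these to derive $c_i \leq (i+1)\,c_{i-1}$, which unrolls to $(i+1)!$.
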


\begin{proof}
Let $c_i = \sum_{C \in \C{C}_i} |C|$. 
We have $c_0 = 1$ from the base case of the algorithm.
At each step the number of constraints in a set of constraints added increases by at most $1$. 
For $i = 0$, we have only one constraint in $\{(S,k)\} \in \C{C}_0$. Therefore, at the $i^{th}$ step, the maximum number of constraints in 
any set contained in $\C{C}_i$ is at most $i+1$.
In the $i^{th}$ recursive step of the algorithm, we add a new set of constraints $C'$ corresponding to each constraint $(X,b)$ contained in some 
member of $\C{C}_{i-1}$. So, we get the following recursive relation: $|\C{C}_i| = c_{i-1}$. 
Using the fact that all members of $\C{C}_i$ contain at most $i+1$ constraints, we get that $c_i = \sum_{C \in \C{C}_i} |C| \leq (i+1)|\C{C}_i| = (i+1)c_{i-1}$.
Solving the recurrence, we get $c_i \leq (i+1)!$. Therefore, $c_\ell \leq (\ell+1)!$.
\end{proof}

\begin{lemma}[Correctness Invariant I]\label{lem-correct-1}
If a $k$-sized independent set $Z \subseteq J_i$ satisfies at least one set of constraints in $\C{C}_i$ then $Z$ is reachable from $S$.
\end{lemma}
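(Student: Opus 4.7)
The plan is to prove this by induction on $i$, following the recursive construction of $\C{C}_i$. The base case $i=0$ is immediate since $\C{C}_0 = \{\{(S,k)\}\}$ forces $Z = S$. For the inductive step, let $Z \subseteq J_i$ be a $k$-sized (necessarily independent) set satisfying some $C' \in \C{C}_i$. By the algorithm, $C'$ was produced from a specific $C \in \C{C}_{i-1}$ and a specific $(X,b) \in C$, and in both branches of the algorithm $C'$ contains $(N(X) \cap J_i, 1)$. Let $v$ be the unique vertex of $Z \cap N(X) \cap J_i$, choose any $u \in N(v) \cap X$ (nonempty because $v \in N(X)$), and set $Z' := (Z \setminus \{v\}) \cup \{u\}$. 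I intend to show that $Z'$ is a $k$-sized independent subset of $J_{i-1}$ satisfying $C$; the inductive hypothesis then produces a reconfiguration sequence from $S$ to $Z'$, which I extend by the single slide $u \to v$ to reach $Z$.

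The key verifications are as follows. First, $u \in X \subseteq J_{i-1}$ by Lemma~\ref{lem-subsets}, and every other set appearing in $C'$ is of the form $X'' \cap J_i$ with $X'' \subseteq J_{i-1}$, so $Z \setminus \{v\} \subseteq J_{i-1}$; hence $Z' \subseteq J_{i-1}$, which is independent. Second, $u \notin Z$ since otherwise $u$ and $v$ would be adjacent vertices both in $Z$, contradicting independence of $Z$; thus $|Z'| = k$. Third, for $Z'$ satisfying $C$: since $v \in N(X)$ and $X$ lies in the independent set $J_{i-1}$, one has $v \notin X$; and by Lemma~\ref{lem-disjoint} applied to $C$, the vertex $u \in X$ lies outside every other $X'$ of $C$. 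A short bookkeeping argument (using $Z \subseteq J_i$ to translate ``$X' \cap J_i$-constraints'' to ``$X'$-constraints'') then gives $|Z' \cap X'| = |Z \cap X'| = b'$ for every $(X',b') \in C \setminus \{(X,b)\}$, and $|Z' \cap X| = |Z \cap X| + 1$, reducing the problem to showing $|Z \cap X| = b - 1$.

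When $b > 1$ this is immediate from the constraint $(X \cap J_i, b-1) \in C'$. The more delicate case, which I expect to be the main obstacle, is $b = 1$: here $C'$ contains no constraint involving $X$ at all, and the bound $|Z \cap X| = 0$ has to be recovered globally. The argument is that Lemmas~\ref{lem-budgets} and~\ref{lem-disjoint}, together with the assumption that $Z$ satisfies $C'$, force $Z$ to be exactly the disjoint union of its intersections with the sets of $C'$, so $Z$ lies in their union; meanwhile $X \cap J_i$ is disjoint from every one of those sets (using $X \cap N(X) = \emptyset$ for $N(X) \cap J_i$, and Lemma~\ref{lem-disjoint} applied to $C$ for the others), forcing $|Z \cap X| = 0 = b-1$. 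The remaining piece is legality of the slide from $Z'$ to $Z$, which only requires $u \in Z'$, $v \notin Z'$ (since $v \neq u$ as $v \in N(X)$ and $u \in X$, and $v \notin Z \setminus \{v\}$), $\{u,v\} \in E(G)$, and $Z$ independent, all of which are already in hand.
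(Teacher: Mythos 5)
Your proof is correct and follows the paper's argument closely: induct on $i$, extract the unique vertex $v \in Z \cap N(X) \cap J_i$, swap it for a neighbor $u \in X$ to form $Z'$, show $Z'$ is a $k$-sized subset of $J_{i-1}$ satisfying the parent constraint set, and apply the inductive hypothesis plus one slide. One thing worth noting: you explicitly justify why $|Z \cap X| = 0$ when $b = 1$ (via the budget and disjointness lemmas forcing $Z$ into the union of the constraint sets of $C'$, none of which meets $X \cap J_i$), whereas the paper's proof simply asserts this in a parenthetical; your version closes that small gap, but the overall route is the same.
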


\begin{proof}
We use induction to prove the lemma. 
For $i = 0$, we have $\C{C}_0 = \{\{(S,k)\}\}$. For $C = \{(S,k)\}$, we can see that the lemma holds trivially.
For the inductive step, we assume that the lemma holds true for $i-1$, and prove it for $i$.
Let $Z \subseteq J_i$ and $|Z| = k$ such that it satisfies some set of constraints $C \in \C{C}_i$. Let $(X,b)$ be the constraint in $C' \in \C{C}_{i-1}$ which 
produces this set of constraints $C$ in the $i^{th}$ step of the algorithm.
Since $Z$ satisfies the set of constraints $C$, we have:

\begin{enumerate}
    \item $|Z \cap (N(X) \cap J_i)| = 1$
    \item $|Z \cap (X \cap J_i)| = |Z \cap X| = b - 1$ ($0$ if $b = 1$)
    \item $|Z \cap (X' \cap J_i)| = |Z \cap X'| = b'$ for all other constraints $(X',b') \in C'$
\end{enumerate}

Let $v^*$ be the vertex in $Z \cap (N(X) \cap J_i)$. Let $u^*$ be a vertex in $X$ sharing an edge with $v^*$. 
Take $Z' = (Z \cup \{u^*\}) \setminus \{v^*\}$. It can be seen that $|Z'| = k$ and $Z$ can be obtained from $Z'$ by sliding one token. 
By our construction of $Z'$, it must satisfy the following conditions:

\begin{enumerate}
    \item $|Z' \cap X| = b \geq 1$ (since $u^*$ is included in $Z'$); and
    \item $|Z' \cap X'| = b'$ for all other constraints $(X',b') \in C'$.
\end{enumerate}

Hence, $Z'$ satisfies the set of constraints $C' \in \C{C}_{i-1}$.
Consequently, $|Z' \cap (\cup_{(X',b') \in C'} X')| = \sum_{(X',b') \in C'} |Z' \cap X'| = \sum_{(X',b') \in C'} b' = k$, where the first 
equality follows from the fact that all $X'$ such that $(X',b') \in C'$ are pairwise disjoint by Lemma~\ref{lem-disjoint} and the last equality 
follows from Lemma~\ref{lem-budgets}. Therefore, $Z' \subseteq \cup_{(X',b') \in C'} X' \subseteq J_{i-1}$ by Lemma~\ref{lem-subsets}.

Thus, $Z'$ is a $k$-sized subset of $J_{i-1}$ and satisfies at least one set of constraints in $\C{C}_{i-1}$. 
By the induction hypothesis, $Z'$ is reachable from $S$. Now, since $Z$ is reachable from $Z'$, $Z$ is also reachable from $S$. 
\end{proof}

\begin{lemma}[Correctness Invariant II]\label{lem-correct-2}
For any $k$-sized independent set $Z \subseteq J_i$, if there is a reconfiguration sequence $S = I'_0, I'_1, I'_2, \ldots, I'_{i} = Z$, where 
for each $p \in [i]_0$, $I'_p \subseteq J_p$, then $Z$ satisfies at least one set of constraints in $\C{C}_i$.
\end{lemma}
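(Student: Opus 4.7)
The plan is to prove this by induction on $i$, mirroring the inductive structure used in Lemmas~\ref{lem-subsets}--\ref{lem-correct-1}. For the base case $i=0$, the only admissible $Z \subseteq J_0 = S$ of size $k$ is $S$ itself, and $S$ trivially satisfies the unique constraint set $\{(S,k)\} \in \C{C}_0$. For the inductive step, I would take the reconfiguration sequence $S = I'_0, I'_1, \ldots, I'_{i-1}, I'_i = Z$, observe that its prefix $I'_0, \ldots, I'_{i-1}$ is a valid reconfiguration sequence with $I'_p \subseteq J_p$ for all $p \in [i-1]_0$, and apply the induction hypothesis to conclude that $I'_{i-1}$ satisfies some $C \in \C{C}_{i-1}$.

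Next, I would unpack the single slide from $I'_{i-1}$ to $Z$: let $u \in I'_{i-1}$ be the token that slides and $v \in Z \setminus I'_{i-1}$ its destination, so that $\{u,v\} \in E(G)$ and $Z = (I'_{i-1} \setminus \{u\}) \cup \{v\}$. By Lemma~\ref{lem-budgets}, $\sum_{(Y,\beta) \in C} \beta = k$; by Lemma~\ref{lem-disjoint}, the sets $Y$ appearing in $C$ are pairwise disjoint; and since $I'_{i-1}$ satisfies $C$ with $|I'_{i-1}| = k$, the set $I'_{i-1}$ is partitioned among these $Y$'s. Hence there is a unique $(X,b) \in C$ with $u \in X$. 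The algorithm, when processing precisely this $(X,b)$, produces a specific new constraint set $C' \in \C{C}_i$, and the goal is to show $Z$ satisfies $C'$.

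The verification splits into the two cases of the algorithm ($b=1$ and $b>1$), but the calculations are parallel. For the constraint $(N(X) \cap J_i,1) \in C'$, I would check that $v \in Z \cap N(X) \cap J_i$ (since $v \in Z \subseteq J_i$ and $v \sim u \in X$), and that no other vertex of $Z$ lies in $N(X)$; for the constraint $(X \cap J_i, b-1) \in C'$ (when $b>1$), I would verify $|Z \cap X| = b-1$; and for each untouched constraint $(X',b') \in C$ with $X' \neq X$, I would verify $|Z \cap X' \cap J_i| = b'$. Each of the latter three reduces, via $u \in X$ and the disjointness of $X, X'$ inside $J_{i-1}$ (Lemma~\ref{lem-disjoint} together with Lemma~\ref{lem-subsets}), to showing $v \notin X$, $v \notin X'$, and that $v$ is the \emph{only} neighbor of $X$ in $Z$.

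The main obstacle, and where I expect to do the real work, is precisely this last bundle of facts about $v$. The key observation that unlocks all of them is that $X, X' \subseteq J_{i-1}$ and that $J_{i-1}$ is an independent set: if $v$ (or any other vertex $w \in Z \setminus \{v\} = I'_{i-1} \setminus \{u\} \subseteq J_{i-1}$) lay in $X$ or $X'$ or $N(X)$ in a forbidden way, one would produce two adjacent vertices inside $J_{i-1}$, contradicting its independence. Once these four small claims are established, $Z$ satisfies every constraint of $C'$, so $Z$ satisfies at least one set of constraints in $\C{C}_i$, completing the induction.
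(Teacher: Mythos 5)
Your proposal is correct and follows essentially the same inductive route as the paper's own proof: apply the induction hypothesis to the prefix to find a satisfied $C \in \C{C}_{i-1}$, locate the unique $(X,b)\in C$ holding the sliding token, pass to the constraint set $C'$ the algorithm builds from $(X,b)$, and verify the three resulting budget conditions. The one place you are slightly more careful than the paper is in spelling out why $v\notin X$, why $v\notin X'$, and why $v$ is the only neighbor of $X$ in $Z$ — you derive all three from the fact that $\bigcup_{(Y,\beta)\in C} Y \subseteq J_{i-1}$ (Lemma~\ref{lem-subsets}) and $J_{i-1}$ is independent, whereas the paper states $|I'_i \cap N(X)|=1$ rather tersely as a consequence of the slide; your explicit independence argument is exactly the missing justification, so this is a welcome bit of added rigor rather than a different argument.
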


\begin{proof}
We use induction to prove the lemma.
For $i = 0$, we have $\C{C}_0 = \{\{(S,k)\}\}$. The set $S$ satisfies the set of constraints $C = \{(S,k)\}$ and the lemma holds.

We now assume that the lemma holds true for $i-1$, and prove it for $i$.
Let $C \in \C{C}_{i-1}$ be the set of constraints that $I'_{i-1}$ satisfies. 
So, $|I'_{i-1} \cap (\cup_{(X,b) \in C} X)| = \sum_{(X,b) \in C} |I'_{i-1} \cap X| = \sum_{(X,b) \in C} b = k$, where the first equality follows from the 
fact that all $X$ such that $(X,b) \in C$ are pairwise disjoint by Lemma~\ref{lem-disjoint} and the last equality follows from Lemma~\ref{lem-budgets}. 
Since $|I'_{i-1}| = k$, we have $I'_{i-1} \subseteq \cup_{(X,b) \in C} X$.
In the $i^{th}$ step of the reconfiguration sequence, we slide a token from $I'_{i-1}$ to $I'_{i}$, i.e. from some set $X$ such that $(X,b) \in C$ to its open neighbourhood. 
Consider the set of constraints $C' \in \C{C}_i$ obtained by splitting the constraint $(X,b)$ in the $i^{th}$ recursive step of the algorithm. 
We will show that $I'_{i}$ satisfies $C'$. 
Since $I'_{i-1}$ satisfies the set of constraints $C$, we have $|I'_{i-1} \cap X| = b$ for all other constraints $(X,b) \in C$. 
So, in the $i^{th}$ step we have $|I'_{i} \cap (N(X) \cap J_i)| = |I'_{i} \cap N(X)| = 1$, where the first equality is 
because $I'_{i} \subseteq J_i$ and the second equality is because $I'_{i}$ is obtained from $I'_{i-1}$ by sliding a token from $X$ to its neighbourhood.
We have $|I'_{i} \cap (X \cap J_i)| = |I'_{i} \cap X| = |I'_{i-1} \cap X| - 1 = b - 1$ as one token is moved from $X$. 
When $b = 1$, we get $I'_{i} \cap X = \emptyset$ and this budget constraint is not included in the $i^{th}$ recursive step of the algorithm.
For all other $(X',b') \in C$, we have $|I'_{i} \cap (X' \cap J_i)| = |I'_{i} \cap X'| = |I'_{i-1} \cap X'| = b'$ as none of the tokens 
in any $X'$ are moved in the $i^{th}$ step of the reconfiguration sequence.
Therefore, $I'_{i} = Z$ satisfies all the constraints in $C'$, as needed. 
\end{proof}

We are now ready to prove our first main theorem. 

\begin{theorem}\label{thm-fpt-ellk-ts}
\textsc{Token Sliding Optimization} is fixed-parameter tractable parameterized by $k + \ell + d$ where $d$ denotes the degeneracy of the graph.
\end{theorem}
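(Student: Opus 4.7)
The plan is to bundle all the pieces built in the section into a single algorithm. First I would compute the independence covering family $\mathcal{F}(G,k)$ using Theorem~\ref{thm-family-degenerate}, in time $O((kd)^{O(k)} \cdot (n+m) \log n)$ and of size at most $O((kd)^{O(k)} \cdot \log n)$, and add $S$ and $T$ to it if they are not already present. I would then enumerate every sequence $\langle J_0, J_1, \ldots, J_\ell \rangle$ of elements of $\mathcal{F}(G,k)$ with $J_0 = S$ and $J_\ell = T$. For each such guess, I would construct $\mathcal{C}_0, \mathcal{C}_1, \ldots, \mathcal{C}_\ell$ by the recursive procedure already defined, and finally test whether $T$ satisfies at least one set of constraints in $\mathcal{C}_\ell$. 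The algorithm outputs \emph{yes} iff this test succeeds for some guessed sequence.

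Correctness follows directly from the two correctness invariants. If the instance is a yes-instance, then by the covering property of $\mathcal{F}(G,k)$ there exists a valid guess $\langle J_0, \ldots, J_\ell \rangle$ together with a reconfiguration sequence $S = I_0, I_1, \ldots, I_\ell = T$ with $I_p \subseteq J_p$ for every $p$, so Lemma~\ref{lem-correct-2} applied at index $\ell$ guarantees that $T$ satisfies some element of $\mathcal{C}_\ell$. Conversely, whenever $T$ satisfies some set of constraints in $\mathcal{C}_\ell$, Lemma~\ref{lem-correct-1} produces a reconfiguration sequence from $S$ to $T$ that stays inside the guessed $J_i$'s and therefore witnesses a yes-instance.

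For the running time, the number of guessed sequences is at most $|\mathcal{F}(G,k)|^{\ell-1} \leq \bigl((kd)^{O(k)} \cdot \log n\bigr)^{\ell-1}$. By Proposition~\ref{prop-log-fpt}, a factor of the form $(\log n)^{\ell-1}$ is bounded by $n + (\ell-1)^{2(\ell-1)}$, so the number of guesses is of the form $f(k,\ell,d) \cdot \mathrm{poly}(n)$. For each guess, the constraint sets can be built in time polynomial in $\sum_i c_i$, and by Lemma~\ref{lem-size-inv} we have $c_i \leq (i+1)!$, hence the total work per guess is bounded by $(\ell+1)! \cdot \mathrm{poly}(n)$. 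Testing whether $T$ satisfies an element of $\mathcal{C}_\ell$ is a straightforward scan through $c_\ell$ constraints. Combining everything yields a running time of the form $g(k,\ell,d) \cdot \mathrm{poly}(n)$ for a computable function $g$, establishing fixed-parameter tractability in the parameter $k+\ell+d$.

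The main technical obstacle I anticipate is the $(\log n)^{\ell-1}$ factor arising from enumerating sequences over $\mathcal{F}(G,k)$; this is precisely the reason Proposition~\ref{prop-log-fpt} was set up in the preliminaries, and it is what prevents a direct extension of this approach to the nowhere dense setting, where the covering family has size linear in $n$. Once that conversion is in place, the rest of the argument is a bookkeeping combination of Lemmas~\ref{lem-subsets}--\ref{lem-correct-2}.
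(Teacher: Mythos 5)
Your proposal is correct and follows essentially the same approach as the paper's proof: compute the independence covering family via Theorem~\ref{thm-family-degenerate}, enumerate candidate sequences $\langle J_0,\ldots,J_\ell\rangle$ with $J_0=S$ and $J_\ell=T$, build the constraint sets $\mathcal{C}_0,\ldots,\mathcal{C}_\ell$, test $T$ against $\mathcal{C}_\ell$, and invoke Lemmas~\ref{lem-correct-1},~\ref{lem-correct-2}, and~\ref{lem-size-inv} together with Proposition~\ref{prop-log-fpt} to bound the running time. Your exponent $\ell-1$ is in fact a tighter count than the paper's $\ell+1$, and your explicit remark about why the $(\log n)^{\ell-1}$ factor blocks a direct extension to nowhere dense graphs matches the paper's own discussion; the only detail the paper adds that you omit is the (inessential) observation that consecutive $J_i$'s may be assumed distinct in the sliding model.
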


\begin{proof}
Let $(G, S, T, k, \ell)$ denote an instance of \textsc{TSO}, where $G$ is $d$-degenerate. 
We first compute in time $O((kd)^{O(k)} \cdot (n + m) \log n)$ an independence covering family 
$\mathcal{F}(G, k)$ for $(G, k)$ of size at most $O((kd)^{O(k)} \cdot \log n)$ (by Theorem~\ref{thm-family-degenerate}). 
We then add $S$ and $T$ to $\mathcal{F}(G, k)$ (in case they do not already belong to $\mathcal{F}(G, k)$). 
Next, we ``guess'' a (iterate over every) sequence $\langle J_0, J_1, ..., J_\ell \rangle$ of elements of $\mathcal{F}(G, k)$ such that $J_0 = S$, $J_\ell = T$. 
Note that this guessing can be accomplished in time $O(((kd)^{O(k)} \cdot \log n)^{\ell + 1})$, which by Proposition~\ref{prop-log-fpt} is still FPT-time. 
Moreover, note that, since we are in the token sliding model, no two consecutive sets $J_i$ and $J_{i+1}$ can be identical. 
Finally, we compute $\C{C}_0$, $\C{C}_1$, $\ldots$, $\C{C}_\ell$, which by Lemma~\ref{lem-size-inv} can also be done in FPT-time. 
To conclude, we simply need to check whether $T$ satisfies at least one set of constraints in $\C{C}_{\ell}$. 
The correctness of the algorithm follows from Lemma~\ref{lem-correct-1} and~\ref{lem-correct-2}. 
\end{proof}

\begin{theorem}\label{thm-fpt-ellkmod-ts}
\textsc{Token Sliding Optimization} is fixed-parameter tractable parameterized by $|N| + k + \ell + d$ on graphs having a modulator $N$ whose 
deletion leaves a $d$-degenerate graph (assuming $N$ is given as part of the input).
\end{theorem}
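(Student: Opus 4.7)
The plan is to mimic the proof of Theorem~\ref{thm-fpt-ellk-ts} almost verbatim, replacing only the construction of the independence covering family. The key observation is that none of the intermediate lemmas (Lemmas~\ref{lem-subsets}, \ref{lem-budgets}, \ref{lem-disjoint}, \ref{lem-size-inv}, \ref{lem-correct-1}, \ref{lem-correct-2}) actually use $d$-degeneracy of the host graph $G$; they only rely on the fact that each $J_i$ in the guessed sequence is an independent set of size $k$ together with purely combinatorial properties of the recursive constraint construction. The degeneracy assumption entered the proof of Theorem~\ref{thm-fpt-ellk-ts} in exactly one place: invoking Theorem~\ref{thm-family-degenerate} to produce $\mathcal{F}(G, k)$ of size $O((kd)^{O(k)} \cdot \log n)$ in FPT-time.

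Concretely, given the instance $(G, S, T, k, \ell)$ together with the modulator $N$ such that $G - N$ is $d$-degenerate, I would first apply Theorem~\ref{thm-family-mod} with $S := N$ to compute, in time
\[
O\!\left(2^{|N|} \cdot (kd)^{O(k)} \cdot 2^{O(kd)} \cdot (n+m)\log n\right),
\]
an independence covering family $\mathcal{F}(G, k)$ of size at most $O(2^{|N|} \cdot (kd)^{O(k)} \cdot 2^{O(kd)} \cdot \log n)$. Since $|N|$, $k$, $d$ are all in the parameter, both the running time and the family size are FPT in $|N| + k + \ell + d$ times a polynomial in $n$. As before, add $S$ and $T$ to $\mathcal{F}(G, k)$ if not already present.

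Next, iterate over every sequence $\langle J_0, J_1, \ldots, J_\ell\rangle$ of elements of $\mathcal{F}(G, k)$ with $J_0 = S$, $J_\ell = T$, and $J_i \neq J_{i+1}$ (the last condition reflects the token-sliding model). The number of such sequences is at most $|\mathcal{F}(G,k)|^{\ell+1}$, which is of the form $(h(|N|, k, d) \cdot \log n)^{\ell+1}$; by Proposition~\ref{prop-log-fpt} this is bounded by $h(|N|, k, d)^{\ell+1} \cdot (n + (\ell+1)^{2(\ell+1)})$, i.e.\ still FPT-time in $|N| + k + \ell + d$. For each guessed sequence, run the identical constraint-propagation procedure from Section~\ref{sec:fpt-ellk-ts} to compute $\mathcal{C}_0, \mathcal{C}_1, \ldots, \mathcal{C}_\ell$ and accept iff $T$ satisfies some set of constraints in $\mathcal{C}_\ell$. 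Lemma~\ref{lem-size-inv} bounds the total work per guess by a function of $\ell$, so the overall running time remains FPT.

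The ``hard'' part is essentially bookkeeping: verifying that the statements and proofs of Lemmas~\ref{lem-subsets}--\ref{lem-correct-2} transfer without change, since they manipulate constraints and neighborhoods inside independent sets $J_i$ and never invoke degeneracy. Correctness of the overall algorithm then follows from Correctness Invariants~I and~II exactly as in the proof of Theorem~\ref{thm-fpt-ellk-ts}: any yes-instance admits a reconfiguration sequence $\langle I_0, I_1, \ldots, I_\ell\rangle$, and by the covering property of $\mathcal{F}(G, k)$ each $I_j$ is contained in some $J_j \in \mathcal{F}(G, k)$, so at least one of the guessed sequences will trigger acceptance; conversely, any sequence that causes acceptance yields an actual reconfiguration from $S$ to $T$ by Lemma~\ref{lem-correct-1}.
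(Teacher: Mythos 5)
Your proposal matches the paper's proof exactly: the paper also proceeds as in Theorem~\ref{thm-fpt-ellk-ts} but invokes Theorem~\ref{thm-family-mod} in place of Theorem~\ref{thm-family-degenerate}. The extra bookkeeping you spell out (that the intermediate lemmas never use degeneracy of $G$) is correct and simply makes explicit what the paper leaves implicit.
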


\begin{proof}
We proceed exactly as in the proof of Theorem~\ref{thm-fpt-ellk-ts} but we invoke Theorem~\ref{thm-family-mod} instead of Theorem~\ref{thm-family-degenerate}. 
\end{proof}

We conlude this section by showing how we can adapt the previous two results for the \textsc{Token Jumping Optimization} problem. 

\begin{theorem}
\textsc{Token Jumping Optimization} is fixed-parameter tractable parameterized by $k + \ell + d$ where $d$ denotes the degeneracy of the graph and 
fixed-parameter tractable parameterized by $|N| + k + \ell + d$ on graphs having a modulator $N$ whose 
deletion leaves a $d$-degenerate graph (assuming $N$ is given as part of the input).
\end{theorem}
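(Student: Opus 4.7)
The plan is to adapt the algorithm of Theorem~\ref{thm-fpt-ellk-ts} (and of Theorem~\ref{thm-fpt-ellkmod-ts}) to token jumping essentially without changing its overall structure. I would compute an independence covering family $\mathcal{F}(G,k)$ via Theorem~\ref{thm-family-degenerate} (respectively Theorem~\ref{thm-family-mod}), enumerate all sequences $\langle J_0, J_1, \ldots, J_\ell\rangle$ of members of $\mathcal{F}(G,k)$ with $J_0 = S$ and $J_\ell = T$, and for each such sequence build constraint systems $\C{C}_0, \C{C}_1, \ldots, \C{C}_\ell$ whose satisfiability by $T$ exactly captures the existence of a reconfiguration sequence $I_0 = S, I_1 \subseteq J_1, \ldots, I_{\ell-1} \subseteq J_{\ell-1}, I_\ell = T$. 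The initial set $\C{C}_0 = \{\{(S,k)\}\}$ and the final acceptance test (``does $T$ satisfy some element of $\C{C}_\ell$?'') are unchanged. Note that for jumping we no longer have the handy observation that consecutive $J_i$ and $J_{i+1}$ must differ (which under sliding was forced by the edge used in the slide); this only makes the enumeration a bit larger but it remains FPT via Proposition~\ref{prop-log-fpt}.

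The only real change is in the recursive construction of $\C{C}_i$ from $\C{C}_{i-1}$. Under sliding, a token leaving a subset $X \subseteq J_{i-1}$ of an independent set is forced into $N(X)$ and therefore cannot land inside another $X'$ from the same constraint set. Under jumping, the destination $v$ need only satisfy $v \in J_i$ and be unoccupied, so for each $C \in \C{C}_{i-1}$ and each source constraint $(X_u, b_u) \in C$, I would branch into three mutually exclusive target cases and add the resulting $C'$ to $\C{C}_i$. In case~(a), $v$ lies in some other $(X_v, b_v) \in C$ with $X_v \neq X_u$; replace $(X_u, b_u)$ by $(X_u \cap J_i, b_u - 1)$ (omitted if $b_u = 1$), replace $(X_v, b_v)$ by $(X_v \cap J_i, b_v + 1)$, and intersect every other constraint's vertex set with $J_i$. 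In case~(b), $v$ lies inside $X_u$ itself; intersect every constraint's vertex set with $J_i$ and leave the budgets unchanged (this case is source-independent and is emitted only once per $C$). In case~(c), $v$ avoids every $X'$ in $C$; replace $(X_u, b_u)$ by $(X_u \cap J_i, b_u - 1)$ (omitted if $b_u = 1$), introduce the fresh constraint $\bigl(J_i \setminus \bigcup_{(X',b') \in C} X',\, 1\bigr)$, and intersect every other constraint's vertex set with $J_i$.

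The analogues of Lemmas~\ref{lem-subsets}, \ref{lem-budgets}, and \ref{lem-disjoint} then carry over by the same induction: cases (a) and (b) introduce no new vertex set, and in case (c) the fresh set $J_i \setminus \bigcup X'$ is disjoint from every surviving $X' \cap J_i$ by construction; the budgets of each emitted $C'$ still sum to $k$ because cases (a) and (b) merely shuffle one unit of budget between already-present constraints and case (c) transfers it to the fresh one. The size bookkeeping is also routine: each $C \in \C{C}_{i-1}$ has $|C| \leq i$ and spawns at most $|C|^2 + 1$ successor sets (one per (source, target) pair in cases (a) and (c), plus the single case (b)), each of size at most $i + 1$, so the total number of constraints at step $\ell$ remains bounded by a function of $\ell$ alone.

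Finally, the two correctness invariants go through exactly as in Lemmas~\ref{lem-correct-1} and~\ref{lem-correct-2}. For Invariant~I, given a $k$-set $Z \subseteq J_i$ satisfying some $C' \in \C{C}_i$, I reverse the guessed jump by identifying the unique token occupying the newly introduced or just-incremented piece of $C'$ and returning it to an arbitrary vertex of $X_u$; the resulting $Z' \subseteq J_{i-1}$ satisfies the parent $C$, and the induction hypothesis supplies a reconfiguration from $S$ to $Z'$ which we extend by the single jump from $Z'$ to $Z$. For Invariant~II, given a valid jumping sequence $I'_0, \ldots, I'_i = Z$ and the $C \in \C{C}_{i-1}$ satisfied by $I'_{i-1}$, one locates the source $X_u$ containing the removed vertex, inspects which of the three cases the added vertex $v$ falls into, and verifies that the corresponding $C' \in \C{C}_i$ is satisfied by $Z = I'_i$. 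The part requiring the most care is case~(a), where the jump lands inside another constraint $X_v$ and bumps its budget from $b_v$ to $b_v + 1$; this situation is ruled out under sliding by independence and is the one place where the invariants need to be checked afresh. The modulator version is then obtained by the now-familiar substitution of Theorem~\ref{thm-family-mod} for Theorem~\ref{thm-family-degenerate}, with no other change to the construction.
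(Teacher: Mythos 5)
Your proposal follows essentially the same approach as the paper's proof. The paper's construction introduces three families of successor constraint sets per $C$: one (their $C_1$) where all budgets and vertex sets are carried over after intersecting with $J_i$, which is your case (b); one (their $C_2$) where the departing budget is routed to $(N(X)\cap J_i)\cup(J_i\setminus\bigcup_{(X',b')\in C}X')$, which matches your case (c) because $N(X)\cap J_i$ is already contained in $J_i\setminus\bigcup X'$ (all the $X'$ live in the independent set $J_{i-1}$, so none of their vertices are neighbours of $X$), so the $N(X)$ term is redundant; and one (their $C_3$) where the departing budget is added to another constraint $(X',b')\mapsto(X',b'+1)$, which is your case (a). Your size and correctness bookkeeping parallels the paper's Lemmas analogous to \ref{lem-size-inv}, \ref{lem-correct-1}, and \ref{lem-correct-2}, and if anything you are a bit more careful than the paper in explicitly intersecting the $C_3$-type sets with $J_i$ so that the invariant of Lemma~\ref{lem-subsets} is maintained.
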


\begin{proof}
To allow tokens to jump to arbitrary vertices of the graph we only need to slightly modify our construction of the  
sets of sets of constraints $\C{C}_1$, $\ldots$, $\C{C}_\ell$. In particular, we do the following:

\begin{itemize}
	\item For each $C \in \C{C}_{i-1}$
	\begin{itemize}
		\item Initialize a constraint set $C_1$ obtained from $C$ by replacing each $(X, b)$ by $(X \cap J_i, b)$; 
		\item If $X \cap J_i \neq \emptyset$ and $|X| \geq b$ for all $X$ then add $C_1$ to $\C{C}_i$; 
		\item For each constraint $(X,b) \in C$ 
        \begin{enumerate}
            \item Initialize a constraint set $C_2 = \emptyset$;
            \item If $b = 1$
            \begin{enumerate}
                \item Add $((N(X) \cap J_i) \cup (J_i \setminus \cup_{(X',b) \in C} X'), 1)$ to $C_2$;
                \item Add $(X' \cap J_i, b')$ for all other constraints $(X',b') \in C$ to $C_2$;
            \end{enumerate}
            \item Else
            \begin{enumerate}
                \item Add $(X \cap J_i,b-1)$ to $C_2$.
                \item Add $((N(X) \cap J_i) \cup (J_i \setminus \cup_{(X',b) \in C} X'), 1)$ to $C_2$;
                \item Add $(X' \cap J_i, b')$ for all other constraints $(X',b') \in C$ to $C_2$;
            \end{enumerate}
            \item Add $C_2$ to $\C{C}_i$; 
			\item If $|C| > 1$ 
			\begin{enumerate}
				\item Initialize $C' = C \setminus (X,b)$ when $b = 1$ and $C' = (C \setminus (X,b)) \cup (X,b - 1)$ otherwise; 
				\item For each $(X',b') \in C'$ where $X' \neq X$
				\begin{enumerate}
					\item Create a new constraint set $C_3 = (C' \setminus (X',b')) \cup (X',b' + 1)$;
					\item If $X' \cap J_i \neq \emptyset$  and $|X'| \geq b'$ for all $X'$ add $C_3$ to $\C{C}_i$;
				\end{enumerate}
			\end{enumerate}
        \end{enumerate}
	\end{itemize}
\end{itemize}

Note that the constraint set $C_1$ allows us to support token jumps within a same set $X$ while 
the constraint set $C_3$ allows us to support token jumps between two different sets $X$ and $X'$. 
It is not hard to see that one can bound the total number of constraints at the $i^{th}$ step by a function of $k$ and $\ell$ just as in Lemma~\ref{lem-size-inv}. 
Similarly, we can prove that Lemmas~\ref{lem-subsets},~\ref{lem-budgets}, and~\ref{lem-disjoint} still hold, from which we conclude that 
the equivalent of Lemmas~\ref{lem-correct-1} and~\ref{lem-correct-2} also holds, as needed. 
\end{proof}

\section{FPT algorithm for parameter $|M| + \ell + \Delta$}\label{sec:fpt-modulator}
In this section, we prove that \textsc{TSO} and \textsc{TJO} 
are fixed-parameter tractable parameterized by $|M| + \ell + \Delta$. Recall that an instance of either problem 
is denoted by $(G, S, T, k, \ell)$ where $V(G)$ can be partitioned into $H$ and $M$ and every vertex in $H$ has degree at most $\Delta$ in $G$. 
Our algorithm is randomized and based on a variant of the color-coding technique~\cite{DBLP:reference/algo/AlonYZ16} that is particularly
useful in designing parameterized algorithms on graphs of bounded
degree. The technique, known in the literature as random separation~\cite{DBLP:conf/iwpec/CaiCC06}, boils
down to a simple, but fruitful observation that in some cases, if we randomly
color the vertex set of a graph using two colors, the solution or vertices we are
looking for are appropriately colored with high probability. In our case, we want to make sure 
that the set of vertices involved in token slides or jumps gets highlighted. We note that our algorithm is an adaptation of the algorithm 
of Mouawad et al.~\cite{DBLP:journals/algorithms/MouawadNRS18} and it can easily be derandomized using standard techniques~\cite{DBLP:books/sp/CyganFKLMPPS15}. 

We start with an instance $(G = (H,M,E), S, T, k, \ell)$ of \textsc{TSO}; the algorithm is identical for \textsc{TJO}. 
We color independently every vertex of $H$ using one of two colors, say red and blue (denoted by $\mc{R}$ and $\mc{B}$), with probability ${1 \over 2}$. 
We let $\chi: H \rightarrow \{\mc{R}, \mc{B}\}$ denote the resulting random coloring. 
Suppose that $(G,S,T,k,\ell)$ is a yes-instance, and let $\sigma$ denote a reconfiguration sequence 
from $S$ to $T$ of length at most $\ell$. We say a vertex $v \in H$ is \emph{touched} in $\sigma$ whenever a token slides from a neighbor of $v$ to $v$ or 
from $v$ to some neighbor of $v$.  
We let $V(\sigma)$ denote the set of vertices touched by $\sigma$.  
We say that the coloring $\chi$ is {successful} if both of the following conditions hold: 

\begin{itemize}
\item Every vertex in $V(\sigma) \cap H$ is colored red; and 
\item Every vertex in $N_H(V(\sigma) \cap H)$ is colored blue. 
\end{itemize}

Observe that $V(\sigma) \cap H$ and $N_H(V(\sigma) \cap H)$ are disjoint. Therefore, the two aforementioned 
conditions are independent. Moreover, since the maximum degree of $G[H]$ is $\Delta$, we have $|V(\sigma) \cap H| + |N_H(V(\sigma) \cap H)| \leq 2\ell\Delta$. 
Consequently, the probability that $\chi$ is successful is at least: 
\begin{align}
\frac{1}{2^{|V(\sigma) \cap H| + |N_H(V(\sigma) \cap H)|}} \geq \frac{1}{2^{2\ell\Delta}} = \frac{1}{4^{\ell\Delta}}.\notag
\end{align}

Let $H_{\mc{R}}$ denote the set of vertices of $H$ colored red and $H_{\mc{B}}$ denote the set 
of vertices of $H$ colored blue. Moreover, we let $C_1$, $\dots$, $C_q$ denote the set of connected components of $G[H_{\mc{R}}]$. 
The main observation now is the following: 

\begin{lemma}\label{prop-colouring}
If $\chi$ is successful then $V(\sigma)$ has a non-empty intersection with at most $2\ell$ connected components of $G[H_{\mc{R}}]$, 
and each one of those components consists of at most $2\ell$ vertices. 
\end{lemma}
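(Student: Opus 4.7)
The plan is to argue purely by counting the vertices that can appear in $V(\sigma)$ and then use the two conditions of a successful coloring to confine the relevant red components inside $V(\sigma) \cap H$.

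First I would observe that a single token slide (or jump) touches at most two vertices: the source $u$ of the token and its destination $v$. Since $\sigma$ has length at most $\ell$, this immediately gives $|V(\sigma)| \leq 2\ell$, and in particular $|V(\sigma) \cap H| \leq 2\ell$. This takes care of the promised vertex budget; it remains to translate this budget into a bound on the number, and the size, of the connected components of $G[H_{\mc{R}}]$ that intersect $V(\sigma)$.

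The key step is the claim that, under a successful coloring $\chi$, every connected component $C$ of $G[H_{\mc{R}}]$ that meets $V(\sigma)$ is in fact contained in $V(\sigma) \cap H$. To prove this, pick any $v \in C \cap V(\sigma) \cap H$ and consider an arbitrary neighbor $w$ of $v$ in $H$. Either $w \in V(\sigma) \cap H$, in which case $w$ is red by the first condition of successfulness, or $w \notin V(\sigma) \cap H$, in which case $w$ lies in $N_H(V(\sigma) \cap H)$ and hence is blue by the second condition and therefore does not belong to $G[H_{\mc{R}}]$. So every red neighbor of $v$ in $H$ already belongs to $V(\sigma) \cap H$, and iterating this along any path in $C$ shows by induction on the path length that all of $C$ sits inside $V(\sigma) \cap H$.

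Finally I would conclude: since each component $C$ of $G[H_{\mc{R}}]$ that intersects $V(\sigma)$ is contained in $V(\sigma) \cap H$, and the components are pairwise disjoint, the number of such components is at most $|V(\sigma) \cap H| \leq 2\ell$, and the size of each is also at most $|V(\sigma) \cap H| \leq 2\ell$, which is exactly the statement of the lemma. I do not foresee a real obstacle here; the only subtle point is recognizing that both clauses in the definition of a successful coloring are needed simultaneously to guarantee that red components touching $V(\sigma)$ cannot leak out of $V(\sigma) \cap H$, and everything else is a one-line counting argument on top of $|V(\sigma)| \leq 2\ell$.
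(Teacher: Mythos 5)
Your proof is correct and uses essentially the same argument as the paper: bound $|V(\sigma)|$ by $2\ell$, then use the two conditions of a successful coloring to show a red component meeting $V(\sigma)$ cannot escape $V(\sigma)\cap H$. The paper phrases the containment step as a contradiction for components of size $> 2\ell$, whereas you state the containment $C\subseteq V(\sigma)\cap H$ directly and derive both bounds from it at once, which is a slightly cleaner presentation of the same idea.
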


\begin{proof} 
Since $|V(\sigma)| \leq 2\ell$, we know that $G[(V(\sigma) \cup N_G(V(\sigma))) \cap H]$ consists of at most $2\ell$ connected components 
(each of size at most $2\ell + 2\ell\Delta$) and $G[V(\sigma) \cap H]$ consists of at most $2\ell$ components (each of size at most $2\ell$). 

Let $C$ be a connected component of $G[H_{\mc{R}}]$ such that $|V(C)| > 2\ell$. Suppose to the contrary that $V(\sigma) \cap V(C) = Q \neq \emptyset$. 
Since $\chi$ is successful, it must be the case that every vertex in $N_H(Q)$ is colored blue. However, we know that there exists 
at least one vertex in $N_H(Q)$ that is colored red (since $C$ is a connected component of $G[H_{\mc{R}}]$ and all at least $2\ell + 1$ vertices in $C$ are colored red). 
As we have obtained a contradiction, we can conclude that when $\chi$ is successful, $V(\sigma)$ can intersect at most $2\ell$ connected components of 
$G[H_{\mc{R}}]$, and none of those components can be of a size greater than $2\ell$, as needed.
\end{proof}

Given an instance $(G = (H,M,E), S, T, k, \ell)$ of \textsc{TSO} and a coloring $\chi$ of $H$, we know from 
Lemma~\ref{prop-colouring} that when $\chi$ is successful every connected 
component of $G[H_{\mc{R}}]$ consists of at most $2\ell$ vertices. We now construct a new (reduced) instance 
$(G', S', T', k', \ell)$ of \textsc{TSO}. We first guess the vertices of $M$ that will be touched in a solution and we let $M'$ denote this set. 
Note that this guessing can be accomplished in time $2^{M}$-time. 
Starting from a copy of $G$ we proceed as follows: 
\begin{itemize}
\item If there exists $v \in (S \cap T) \cap H$ and $v$ is colored blue then we delete $v$ and its neighbors from the graph (since $v$ will not be touched and its neighbors cannot contain tokens);
\item If there exists $v \in (S \cap T) \cap (M \setminus M')$ then we delete $v$ and its neighbors from the graph (since $v$ will not be touched and its neighbors cannot contain tokens);
\item If there exists $v \in (S \cap T) \cap H$, $v$ is colored red, and $v$ belongs to a red component $C$ of $G[H_{\mc{R}}]$ such that $|V(C)| > 2\ell$ 
then we delete $v$ and its neighbors from the graph (since $v$ will not be touched and its neighbors cannot contain tokens);
\item If there exists a blue vertex $v$ which is not in $S \cap T$ then we delete $v$ from the graph (note that $v$ cannot be in $(S \setminus T) \cup (T \setminus S)$ as otherwise the coloring cannot be successful);
\item If there exists a red vertex  $v$ which is not in $S \cap T$ and $v$ belongs to a red component $C$ of $G[H_{\mc{R}}]$ such that $|V(C)| > 2\ell$ 
then we delete $v$ from the graph ($v$ cannot be in $(S \setminus T) \cup (T \setminus S)$).
\end{itemize}
We adjust $S$, $T$, and $k$ appropriately to obtain the new equivalent instance $(G', S', T', k', \ell)$. 
Note that in this new instance (assuming a successful coloring) no vertices are colored 
blue and (assuming a correct guess) all vertices of $M'$ will be touched in a solution. 
In other words, $G'$ can be partitioned into $M'$ and $H'$ where $H'$ consists of (an unbounded number of) connected components each 
consisting of at most $2\ell$ vertices. Note that when the number of connected components is constant then we are done since 
we can solve the problem via brute-force. In other words, we can simply enumerate all possible sequences of length at most $\ell$ and make sure 
that at least one of them is the required reconfiguration sequence from $S'$ to $T'$. 
This brute-force testing can be accomplished in time $2^{\Oh(\ell \log \ell)} \cdot n^{\Oh(1)}$. 

Let us now consider the general case when the number of components is not necessarily bounded. 
We say a component $C$ of $G' - M'$ is \emph{important} if $V(C) \cap ((S' \setminus T') \cup (T' \setminus S')) \neq \emptyset$. 
There are at most $2\ell$ important components. Hence, we only need to bound the number of \emph{unimportant} components. 
To that end, we partition the unimportant components of $G' - M'$ into equivalence classes with 
respect to the relation $\simeq$. For two graphs $G_1$, $G_2$ and two sets $X_1 \subseteq V(G)$, $X_2 \subseteq V(G_2)$, we 
say that $(G_1,X_1)$ and $(G_2,X_2)$ are \emph{isomorphic} if the graphs $G_1$ and $G_2$ are isomorphic where vertices of $X_1$ and $X_2$ are now assigned the same color. 
Formally, a \emph{$c$-colored graph} $G$ is a tuple $(V,E,\mathcal{K})$ such that $\mathcal{K} = \{K_1, \ldots, K_c\}$ is a collection of subsets of $V(G)$ 
where each $K_i$ is called a \emph{color set}. Two colored graphs $G_1 = (V_1,E_1,\mathcal{K}_1)$ and $G_2 = (V_2,E_2,\mathcal{K}_2)$ 
are isomorphic if there is a \emph{color-preserving isomorphism} $f : V_1(G_1) \rightarrow V_2(G_2)$ such that:
\begin{itemize}
\item for all $u, v \in V_1(G_1)$, $\{u,v\} \in E_1(G_1)$ if and only if $\{f(u), f(v)\} \in E_2(G_2)$; and
\item for all $v \in V_1(G_1)$ and $K^1_i \in \mathcal{K}_1$, $v \in K^1_i$ if and only if $f(v) \in K^2_i$.
\end{itemize}
Hence, $(G_1,X_1)$ and $(G_2,X_2)$ are isomorphic if the colored graphs $G_1 = (V_1, E_1, \{X_1\})$ and $G_2 = (V_2, E_2, \{X_2\})$ are isomorphic. 
Let $C_1$ and $C_2$ be two components in $G' - M'$ and let $N_1$ and $N_2$ be their respective neighborhoods in $M'$. 
We say $C_1$ and $C_2$ are \emph{equivalent}, i.e., $C_1 \simeq C_2$, whenever $N_1 = N_2 = N$ and 
$(G[V(C_1) \cup N], V(C_1) \cap S' \cap T')$ is isomorphic to $(G[V(C_2) \cup N], V(C_2) \cap S' \cap T')$ by an isomorphism that fixes $N$ point-wise. 

\begin{proposition}\label{cl:num-classes2}
The total number of $2$-colored graphs with at most $2\ell$ vertices is at most $2^{\Oh(\ell^2)}$, and therefore, 
the equivalence relation $\simeq$ has at most $2^{\Oh(\ell^2)}$ equivalence classes. 
\end{proposition}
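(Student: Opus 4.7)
The plan is to prove the claim in two stages: first a direct count of $2$-colored graphs on at most $2\ell$ vertices, and then an injection from $\simeq$-classes of unimportant components into such colored graphs.

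For the count, I would fix a labeled vertex set of size $n \leq 2\ell$ and enumerate the 2-colored graphs on it. By the definition given earlier, such a structure is specified by (i) its edge set, a subset of $\binom{[n]}{2}$, contributing at most $2^{\binom{n}{2}}$ possibilities, together with (ii) the pair of color sets $K_1, K_2 \subseteq [n]$, each vertex being independently in or out of each $K_i$, contributing at most $2^{2n}$ possibilities. For $n \leq 2\ell$ the product is at most $2^{\binom{2\ell}{2} + 4\ell} = 2^{\Oh(\ell^2)}$, and summing over $n \in \{0,1,\dots,2\ell\}$ multiplies this by only a polynomial factor in $\ell$ that is absorbed into the exponent. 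Since every isomorphism class has at least one labeled representative, the number of isomorphism types of 2-colored graphs on at most $2\ell$ vertices is also at most $2^{\Oh(\ell^2)}$.

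For the second part, I would show that each $\simeq$-class on unimportant components maps injectively into the set of colored graphs just counted. By definition, a class is determined by the isomorphism type (fixing $N$ pointwise) of the pair $(G[V(C) \cup N], V(C) \cap S' \cap T')$. By Lemma~\ref{prop-colouring}, $|V(C)| \leq 2\ell$, so after recording, for each vertex $v \in V(C)$, both its membership in $S' \cap T'$ and its adjacency pattern into the (already identified) vertices of $N$, we obtain a colored graph on at most $2\ell$ vertices whose isomorphism type uniquely determines the $\simeq$-class of $C$. The injection transports the bound from the first step, so the number of equivalence classes is also $2^{\Oh(\ell^2)}$ in the combined parameter $|M| + \ell + \Delta$.

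The main obstacle I expect is keeping this injection clean while staying within a $2$-colored graph on at most $2\ell$ vertices: a naive approach that puts all of $N$ inside the graph as free vertices would inflate the vertex count by $|M'|$ and break the hypothesis. The correct view is to treat vertices of $N$ as an external labeled reference that the isomorphism must fix pointwise, so that the only varying data is the structure of $V(C)$ together with the $V(C)$-to-$N$ adjacency pattern, which can be folded into the color labels on $V(C)$. Once this encoding is pinned down, Lemmas like~\ref{prop-colouring} together with the counting argument of the first paragraph close the proof.
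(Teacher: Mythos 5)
Your first paragraph is correct and straightforward: for a fixed labeled vertex set $[n]$ with $n \le 2\ell$, there are at most $2^{\binom{n}{2}}$ choices of edge set and at most $2^{2n}$ choices for the two color sets, and summing over $n \le 2\ell$ gives $2^{\Oh(\ell^2)}$. The paper states this proposition without proof, so there is no reference argument to compare against; the evaluation has to rest on your reasoning alone.

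The gap is in your second paragraph, and it is exactly where you say you expect an obstacle. You propose to ``fold the $V(C)$-to-$N$ adjacency pattern into the color labels on $V(C)$.'' But for a vertex $v \in V(C)$, the data ``which vertices of $N$ are adjacent to $v$'' is a subset of $N$, and $N$ is \emph{not} a bounded-size set: $|N| \le |M'|$, and even with the degree bound it can be as large as $2\ell\Delta$. A $2$-coloring of $V(C)$ carries only $\Oh(1)$ bits per vertex, so it simply cannot record this data. Your injection therefore does not land in the set of $2$-colored graphs on at most $2\ell$ vertices, and the counting bound from the first paragraph does not transfer. In fact, the equivalence relation $\simeq$ also requires $N_1 = N_2$, so the set of classes is further fibered over the choice of $N \subseteq M'$ (a set whose size depends on $|M|$, not $\ell$). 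The honest bound one gets from your encoding idea is on the order of $\binom{|M'|}{2\ell\Delta} \cdot 2^{\Oh(\ell\Delta\log(\ell\Delta))} \cdot 2^{\Oh(\ell^2)}$: one factor for choosing $N$, one for the $V(C)$-to-$N$ edges (using the $\Delta$-degree bound), and one for the internal structure of $G[V(C)]$ and the $S'\cap T'$ coloring. That is FPT in $|M|+\ell+\Delta$, but it is not $2^{\Oh(\ell^2)}$. Your closing hedge — that the count is ``$2^{\Oh(\ell^2)}$ in the combined parameter $|M|+\ell+\Delta$'' — does not actually resolve this, since $2^{\Oh(\ell^2)}$ contains no dependence on $|M|$ or $\Delta$. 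Note that the ``therefore'' in the proposition itself already glosses the same issue, so the difficulty is not unique to your write-up, but a correct proof would have to make the dependence on $|M|$ and $\Delta$ explicit or explain why it can be suppressed.
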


Assume that some equivalence class contains more than $2\ell$ unimportant components. 
We claim that retaining only $2\ell$ of them is enough. 
To see why, it is enough to note that $V(\sigma)$ intersects with at most $2\ell$ of those components; they are all equivalent. 
Putting it all together, we know that we have at most 
$2^{\Oh(\ell^2)}$ equivalence classes, each with at most $2\ell$ components, and each component is of size at most $2\ell$. 
Hence, we can guess the sequence from $S'$ to $T'$ in time $2^{\Oh(\ell^3 \log \ell)} \cdot n^{\Oh(1)}$ 
(testing whether two graphs with $2\ell$ vertices are isomorphic can be accomplished naively in time $2^{\Oh(\ell \log \ell)}$). 

We have proven that the probability that $\chi$ is successful is at least $4^{-\ell\Delta}$. 
Hence, to obtain a Monte Carlo algorithm with false negatives, we repeat the
above procedure $4^{\ell\Delta}$ times and obtain the following result:

\begin{theorem}
There exists a one-sided error Monte Carlo algorithm with false negatives that 
solves \textsc{TSO} and \textsc{TJO} parameterized by $|M| + \ell + \Delta$ in time $\Oh(2^{M} \cdot 4^{\ell\Delta} \cdot 2^{\Oh(\ell^3 \log \ell)} \cdot n^{\Oh(1)})$. 
\end{theorem}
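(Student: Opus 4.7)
The plan is to design a randomized algorithm based on the random separation technique. First I would apply a random two-coloring $\chi : H \to \{\mc{R}, \mc{B}\}$, coloring each vertex independently with probability $1/2$. I would then argue that if $(G, S, T, k, \ell)$ is a yes-instance with witnessing sequence $\sigma$, a coloring is \emph{successful} when every vertex touched by $\sigma$ in $H$ is red while every $H$-neighbor of a touched vertex is blue. Since $|V(\sigma) \cap H| \le 2\ell$ and the maximum degree of $G[H]$ is $\Delta$, the set $V(\sigma) \cap H$ together with $N_H(V(\sigma) \cap H)$ has size at most $2\ell + 2\ell\Delta$, so the two independent coloring events succeed jointly with probability at least $4^{-\ell\Delta}$ (roughly).

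Next I would guess, in time $2^{|M|}$, the subset $M' \subseteq M$ of modulator vertices touched by $\sigma$. Conditioning on a successful coloring and a correct guess of $M'$, I would reduce the instance by deleting (i) vertices in $(S \cap T)$ together with their neighbors whenever those vertices are either blue, lie in $M \setminus M'$, or belong to a red component of $G[H_{\mc{R}}]$ of size exceeding $2\ell$ (they cannot be touched, so neighbors carry no tokens), and (ii) any blue vertex or any vertex of an oversized red component that lies outside $S \cap T$. This is justified by the key structural observation (Lemma~\ref{prop-colouring}): in a successful coloring, $V(\sigma)$ can only intersect red components of $G[H_{\mc{R}}]$ of size at most $2\ell$, and at most $2\ell$ such components overall. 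After this reduction $G'$ splits into $M'$ plus components of $H'$ each of size $\le 2\ell$.

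Now the core difficulty is that the number of small components can still be unbounded in $n$. To overcome this, I would define an equivalence relation $\simeq$ on components where $C_1 \simeq C_2$ iff they have the same neighborhood in $M'$ and the colored graphs $(G[V(C_1) \cup N_1], V(C_1) \cap S' \cap T')$ and $(G[V(C_2) \cup N_2], V(C_2) \cap S' \cap T')$ are isomorphic by a map fixing $N_1 = N_2$ pointwise. The number of equivalence classes is bounded by the number of $2$-colored graphs on $\le 2\ell$ vertices, which is at most $2^{\Oh(\ell^2)}$. Since $V(\sigma)$ can touch at most $2\ell$ components per class, keeping at most $2\ell$ representatives of each class yields an equivalent instance of size bounded by a function of $\ell$ only (aside from $M'$); a brute-force enumeration of all reconfiguration sequences of length at most $\ell$ then solves it in time $2^{\Oh(\ell^3 \log \ell)} \cdot n^{\Oh(1)}$, since we must choose for each step a touched vertex among $\Oh(\ell^3)$ candidates and isomorphism tests on $2\ell$-vertex graphs cost $2^{\Oh(\ell \log \ell)}$.

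The main obstacle I anticipate is verifying the exchange argument for the equivalence classes: one must argue that if $\sigma$ uses a particular unimportant component $C$ from a class with many representatives, then some unused representative $C'$ in the same class can be swapped in, yielding a valid reconfiguration sequence in the reduced instance. This uses the color-preserving isomorphism fixing $N(C) = N(C')$ in $M'$, together with the fact that all vertices of $C$ and $C'$ outside $S' \cap T'$ are unoccupied initially. Finally, to amplify the $4^{-\ell\Delta}$ success probability into a Monte Carlo algorithm with bounded false-negative rate, I would repeat the entire procedure $4^{\ell\Delta}$ times; combining all factors yields the claimed running time $\Oh(2^{|M|} \cdot 4^{\ell\Delta} \cdot 2^{\Oh(\ell^3 \log \ell)} \cdot n^{\Oh(1)})$. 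Exactly the same argument applies verbatim to \textsc{TJO}, since the only use of the degree bound is in bounding $|N_H(V(\sigma) \cap H)|$, which is independent of whether tokens slide or jump.
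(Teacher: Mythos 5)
Your proposal follows essentially the same approach as the paper: random separation with a two-coloring of $H$, the $4^{-\ell\Delta}$ success bound, guessing the touched modulator vertices $M' \subseteq M$ in $2^{|M|}$ time, the structural lemma that touched red components are small, equivalence classes of unimportant components under colored isomorphism bounded by $2^{\Oh(\ell^2)}$, retaining $2\ell$ representatives per class, and brute-force search in $2^{\Oh(\ell^3\log\ell)} \cdot n^{\Oh(1)}$ time. The paper likewise leaves the exchange argument for unimportant components at the level of an assertion, so your flagging of it as the point requiring care is well-placed but does not constitute a departure from the paper's reasoning.
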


\section{Hardness of \textsc{TSO} parameterized by $\ell$ on $2$-degenerate graphs}\label{sec:hard-ts}
In the \textsc{Multicolored Clique} problem, we are given an input graph $G$ whose vertices are colored with $k$ 
colors and the goal is to find a clique containing one vertex from each color. 
We show that \textsc{TSO} parameterized by $\ell$ is \WOH~on $2$-degenerate graphs via a reduction from \textsc{Multicolored Clique}, known to be \WOH.

We construct an instance $(G', S, T, \kappa, \ell = 8{k\choose 2}+2k)$ of \textsc{TSO} from an instance of \textsc{Multicolored Clique} denoted by 
$(G,k,(V_1,V_2,\ldots,V_k))$, where, w.l.o.g., we assume that there are no edges between two vertices of $G$ of the same color.

\subparagraph*{Construction of $G'$.} 
We subdivide all the edges in $G$. Let the vertex set of $G$ be $V$. All the vertices corresponding to the edges in $G$ are partitioned 
into ${k\choose 2}$ sets of the form $\C{E}_{ij}$, where $i=\{1,2,\ldots,k\}$ and $j=\{1,2,\ldots,k\}$ and $i\ne j$, such that $\C{E}_{ij}$ contains 
all the vertices corresponding to the edges in $G$ having one incident vertex of color $i$ and the other incident vertex of color $j$. 
Let the union of all the sets $\C{E}_{ij}$ be denoted by $E$.

We introduce two independent sets $X$ and $Y$, each of size ${k\choose 2}$. Let us label the vertices in $X$ from $1$ to ${k\choose 2}$ and 
the sets $\C{E}_{ij}$ from $1$ to ${k\choose 2}$. We add edges between vertex with label $b$ in $X$ and all the vertices in the $\C{E}_{ij}$ with label $b$. 
Similarly, we label the vertices of $Y$ and add edges from each vertex in $Y$ to all the vertices in the $\C{E}_{ij}$ having the same label. 
Each of these edges is further subdivided three times. Let the vertices on the subdivided edges from $X$ to $E$, which are 
neither adjacent to some vertex in $X$ nor $E$ be denoted by $U_1$, and the vertices on the subdivided edges from $Y$ to $E$, which are neither 
adjacent to some vertex in $Y$ nor $E$ be denoted by $U_2$. We take $U = U_1 \cup U_2$.

\begin{figure}
\centering
\includegraphics[scale=0.5]{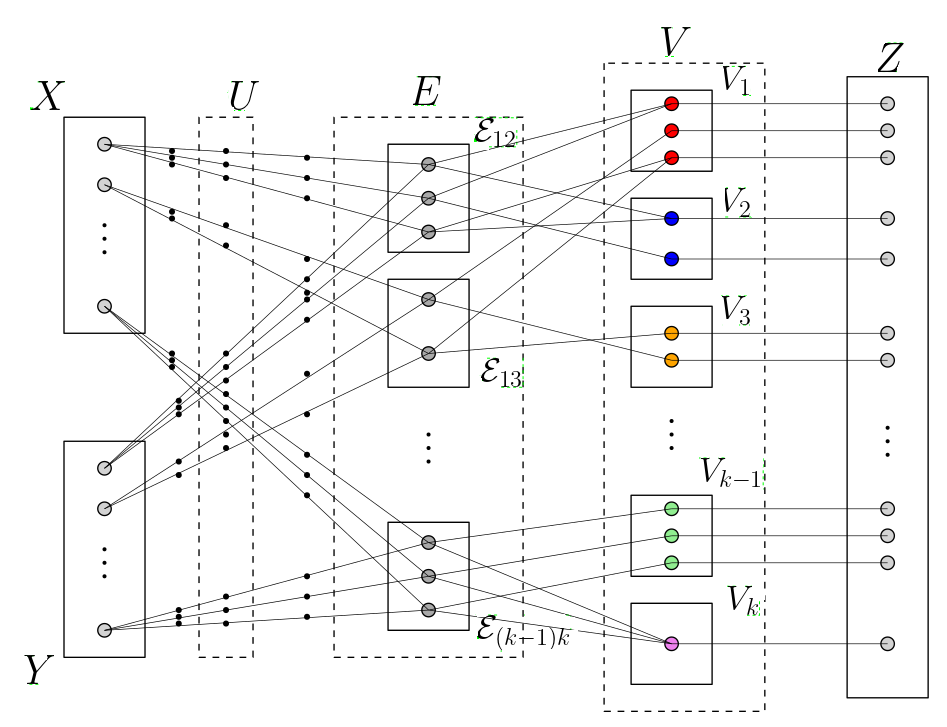}
\caption{An illustration of the reduction from $(G,k,(V_1,V_2,\ldots,V_k))$ to $(G', S, T, \kappa, \ell = 8{k\choose 2}+2k)$.}
\label{fig:replacets}
\end{figure}

We also add a vertex corresponding to each vertex in $V$ and add an edge between the two. 
Let this set of vertices be $Z$. The induced subgraph of $G'$ having $V \cup Z$ as its vertex set forms a perfect matching.
Our initial independent set $S = V \cup X \cup U$ and our target independent set $T = V \cup Y \cup U$. 
Note that $|S| = |T| = n + {k\choose 2} + |U| = \kappa$. We set $\ell = 8{k\choose 2}+2k$. 

\begin{lemma}\label{lem:ts-degenerate}
The graph $G'$ is $2$-degenerate.
\end{lemma}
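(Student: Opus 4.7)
The plan is to exhibit an explicit $2$-degeneracy ordering of $V(G')$, i.e., a linear ordering in which every vertex has at most two neighbours appearing earlier (equivalently, a peeling order in which the remaining graph always contains a vertex of degree at most $2$). To set it up I would first identify the natural strata arising from the construction: the original vertex copies $V$, their matched partners $Z$, the edge-vertices $E$ coming from the first subdivision of $G$, the two independent sets $X$ and $Y$, the ``outer'' subdivision vertices $a_1$ (adjacent to $X \cup Y$) and $a_3$ (adjacent to $E$) sitting on each subdivided path $x - a_1 - a_2 - a_3 - e$, and the middle subdivision vertices that form $U = U_1 \cup U_2$ (the $a_2$'s).

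I would then place the vertices in the order
\[
V,\quad E,\quad X \cup Y,\quad \{a_1 \text{ and } a_3 \text{ vertices}\},\quad U,\quad Z,
\]
and verify stratum by stratum that each vertex has at most two back-neighbours. The degree facts I would rely on are: each $e \in E$ has exactly two neighbours in $V$ (the endpoints of the corresponding edge of $G$) while its two other neighbours are outer subdivision vertices placed later; each of $a_1$, $a_2$, $a_3$ has degree exactly $2$ in $G'$; each $z \in Z$ has degree $1$; and each vertex of $X \cup Y$ has all of its neighbours in the later $a_1$-stratum. These give back-neighbour counts of $0, 2, 0, 1, 2, 1$ across the six strata, witnessing $2$-degeneracy.

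There is no serious obstacle. The only point that deserves attention is the claim that each $e \in E$ receives exactly one subdivided path from $X$ and one from $Y$, so that its back-neighbour count remains $2$ rather than growing. This is immediate from the labelling convention used in the construction: if $e \in \mathcal{E}_{ij}$, then $\mathcal{E}_{ij}$ carries a unique label $b$, and only the vertex of $X$ (respectively of $Y$) labelled $b$ is joined to $e$. Once this is noted, the displayed ordering witnesses $2$-degeneracy and Lemma~\ref{lem:ts-degenerate} follows.
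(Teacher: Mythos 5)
Your proof is correct. Where the paper argues via the hereditary characterization of degeneracy — taking an arbitrary induced subgraph $H$ and locating a vertex of degree at most $2$ by a case analysis on which vertex classes $H$ intersects ($Z$ and the subdivision-path vertices have global degree $\le 2$; failing those, $X \cup Y$ vertices become isolated and $E$ vertices retain only their two $V$-neighbours) — you instead exhibit an explicit $2$-elimination ordering $V, E, X \cup Y, \{a_1, a_3\}, U, Z$ and check the back-neighbour counts $0, 2, 0, 1, 2, 1$ stratum by stratum. The two arguments hinge on the same degree observations (each path vertex and each $z \in Z$ has bounded degree in $G'$; $V$, $E$, $X$, $Y$ are independent; each $e \in E$ has exactly two $V$-neighbours and exactly one incident subdivided path from each of $X$ and $Y$), so the underlying insight is identical, but your presentation is more constructive and incidentally tightens a small gap in the paper's exposition: the paper omits the subcase where $H$ is contained entirely in $V$, in which every vertex of $H$ is isolated, whereas your ordering handles all cases uniformly. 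Your closing remark about each $e$ receiving exactly one path from $X$ and one from $Y$ is the right point to flag, and your justification via the label-matching convention is correct.
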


\begin{proof}
Recall that a graph $G'$ is $2$-degenerate if every induced subgraph $H$ of $G'$ has a vertex of degree at most $2$. 
Consider any induced subgraph $H$ of $G'$. If $H$ contains a vertex of $Z$ or a vertex from the subdivided edges from $X \cup Y$ to $E$ then we are done; 
as those vertices have degree at most two in $G'$. Otherwise, we know that $H$ either contains an isolated vertex from $X \cup Y$ or a degree-two vertex from $E$, as needed. 
\end{proof}

\begin{lemma}\label{lem:l1}
If $(G,k,(V_1,V_2,\ldots,V_k))$ is a yes-instance of \textsc{Multicolored Clique} then there is a reconfiguration sequence of length at most $\ell$ from $S$ to $T$ in $G'$.
\end{lemma}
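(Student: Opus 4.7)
Fix a multicolored clique $\{v_1,\ldots,v_k\}$ with $v_i\in V_i$ witnessing the yes-answer, and let $e_{ij}\in \C{E}_{ij}$ denote the subdivision vertex corresponding to the clique edge $\{v_i,v_j\}$. The plan is to build the reconfiguration sequence from $S=V\cup X\cup U$ to $T=V\cup Y\cup U$ in three phases, whose total cost is exactly $2k+8\binom{k}{2}=\ell$:
first, evacuate each of the $k$ clique vertices onto its pendant in $Z$; second, for each of the $\binom{k}{2}$ chosen edge vertices $e_{ij}$, push one token from the corresponding labelled vertex of $X$ through the length-$8$ path across $e_{ij}$ to the corresponding labelled vertex of $Y$; third, slide the pendant tokens back onto $V$.

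For Phase 1 I will simply slide the token on $v_i$ to its unique pendant $z_{v_i}\in Z$, for $i=1,\ldots,k$ (in any order). Each $z_{v_i}$ has degree one, so the resulting configuration is still independent; this costs $k$ slides. Phase 3 is the mirror image, also costing $k$ slides. The heart of the proof is Phase 2, which I will implement edge by edge. For a fixed $e_{ij}$, label the nine vertices of the relevant path as $a_0,\ldots,a_8$ with $a_0\in X$, $a_4=e_{ij}$, $a_8\in Y$, $a_2\in U_1$, $a_6\in U_2$ (so $a_1,a_3$ are the subdivision neighbours of $a_0,a_4$ in the $X$-side, and $a_5,a_7$ are the corresponding vertices on the $Y$-side). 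The tokens currently on this path sit exactly at $\{a_0,a_2,a_6\}$, and the target arrangement on it is $\{a_2,a_6,a_8\}$. I will execute the eight slides in the order
\[ a_2\to a_3,\ a_3\to a_4,\ a_0\to a_1,\ a_1\to a_2,\ a_6\to a_7,\ a_7\to a_8,\ a_4\to a_5,\ a_5\to a_6, \]
and verify independence after each one. The only nontrivial checks are that placing a token on $a_4=e_{ij}$ does not clash with its neighbours in $V$ (handled because Phase 1 has removed the tokens at $v_i,v_j$, which are the only neighbours of $e_{ij}$ outside the path) and that briefly occupying $a_1,a_3,a_5,a_7$ does not conflict with tokens on $X$ or $Y$ (since these subdivision vertices each have only two neighbours, both inside the current path).

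The total slide count is $k+8\binom{k}{2}+k=\ell$, which matches the bound. The main obstacle I foresee is bookkeeping the independence invariants rather than any combinatorial difficulty: the token at $a_0\in X$ is also adjacent to the vertices $a_1$ belonging to the other paths attached to $a_0$, and one must ensure that while we process the edge $e_{ij}$ no other edge's $a_1$-vertex is occupied. This is automatic because I process the $\binom{k}{2}$ edges sequentially and each edge starts and ends the Phase-2 subroutine with the configuration on its own path restored (apart from the intended $a_0\mapsto a_8$ shift). Once these local checks are in place, concatenating the three phases gives a valid reconfiguration sequence of length $\ell$ from $S$ to $T$, establishing the lemma.
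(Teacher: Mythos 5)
Your proof is correct and follows essentially the same three-phase strategy as the paper: evacuate the $k$ clique-vertex tokens to their pendants in $Z$, then push a token from $X$ to $Y$ across each of the $\binom{k}{2}$ selected subdivided-edge paths in $8$ slides apiece, then restore the clique-vertex tokens, for a total of $2k+8\binom{k}{2}=\ell$ slides. The only (immaterial) deviation is the internal ordering of the eight slides within each path gadget — the paper clears the $Y$-end first, then pushes the middle token through the edge vertex, then advances the $X$-token, whereas you interleave the moves — but both orderings maintain independence at every step.
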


\begin{proof}
Let the solution to the \textsc{Multicolored Clique} instance be $\{v_1,v_2,\ldots,v_k\} \subseteq V$. 
Consider the following reconfiguration sequence from $S$ to $T$:
\begin{enumerate}
	\item Slide each token on $v_i$ to its matched neighbour in $Z$; for a total of $k$ slides.
	\item Since the vertices $\{v_1,v_2,\ldots,v_k\}$ form a clique in $G$, there are ${k\choose 2}$ edges, each having distinct pair of colors on their incident vertices. 
	So in $G'$, all the vertices corresponding to the edges of the clique lie in distinct partitions $\C{E}_{ij}$. 
	We slide all the tokens from $X$ to $Y$ using these ${k\choose 2}$ vertices. 
	Consider the path from a vertex $v_x \in X$ to a vertex $v_y \in Y$, passing through one of these ${k\choose 2}$ vertices, say $v_i$ where $i \in [k]$. 
	This path contains a vertex $u_1 \in U_1$ and a vertex $u_2 \in U_2$. Slide the token on $u_2$ to $v_y$ ($2$ slides), the 
	token on $u_1$ to $u_2$ through $v_i$ ($4$ slides), and the token on $v_x$ to $u_1$ along this path ($2$ slides); for a total of $8{k\choose 2}$ slides.
    \item Finally we slide the tokens in $Z$ back to $V$; for a total of $k$ slides.
\end{enumerate}
The length of the reconfiguration sequence is $8{k\choose 2}+2k$. This completes the proof.
\end{proof}

\begin{lemma}\label{lem:l2}
If there is a reconfiguration sequence of length at most $\ell$ from $S$ to $T$ in $G'$ then $(G,k,(V_1,V_2,\ldots,V_k))$ is a yes-instance of \textsc{Multicolored Clique}.
\end{lemma}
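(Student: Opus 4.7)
I plan to prove the contrapositive via a potential-function argument combined with a cut-based bookkeeping. Let $\sigma = (S = I_0, I_1, \ldots, I_\ell = T)$ be a reconfiguration sequence of length at most $\ell = 8\binom{k}{2} + 2k$, and write $\C{E}_b$ for the set $\C{E}_{ij}$ labelled $b$.

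I would first define a potential $\phi: V(G') \to \{0, 1, \ldots, 8\}$ with $\phi \equiv 0$ on $X$, $\phi \equiv 8$ on $Y$, $\phi \equiv 4$ on $V \cup E \cup Z$, and linearly interpolating along each subdivided path so that the middle vertices of the $X$-to-$E$ paths (which form $U_1$) receive $\phi = 2$, and the middle vertices of the $E$-to-$Y$ paths (which form $U_2$) receive $\phi = 6$. By construction $\phi$ is $1$-Lipschitz on edges of $G'$, so each slide changes $\Phi(I) := \sum_{v \in I} \phi(v)$ by at most $1$. A short computation gives $\Phi(T) - \Phi(S) = 8\binom{k}{2}$. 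Letting $p, n, z$ count the slides of $\sigma$ with $\Delta\Phi = +1, -1, 0$ respectively, we have $p - n = 8\binom{k}{2}$ and $p + n + z \leq \ell$, which together yield $2n + z \leq 2k$.

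Observe that $\phi$ is constant only on $V$-$E$ and $V$-$Z$ edges, so every $0$-slide is incident to some vertex of $V$. Let $\C{V}$ denote the set of $v \in V$ that appear as an endpoint of at least one slide in $\sigma$. Because $V \subseteq S \cap T$, each $v \in \C{V}$ is incident to an even positive (and hence $\geq 2$) number of slides, giving $z \geq 2|\C{V}|$ and therefore $|\C{V}| \leq k$. Now whenever some $e \in \C{E}_{ij}$ is occupied during $\sigma$, independence forces both of its $V$-neighbors $v_i, v_j$ to be empty at that moment, so $\{v_i, v_j\} \subseteq \C{V}$. Consequently, the set $\C{E}^*$ of edge-vertices ever occupied during $\sigma$, viewed as a subset of $E(G)$, satisfies $\C{E}^* \subseteq E(G[\C{V}])$, and hence $|\C{E}^*| \leq \binom{|\C{V}|}{2} \leq \binom{k}{2}$.

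For the matching lower bound, I show $\C{E}^* \cap \C{E}_b \neq \emptyset$ for every label $b$. Let $A_b$ consist of $x_b$ together with the three subdivision vertices on each subdivided $x_b$-to-$e$ path for $e \in \C{E}_b$. Then $|S \cap A_b| = 1 + |\C{E}_b|$ (namely $x_b$ and the $U_1$-middles along these paths) and $|T \cap A_b| = |\C{E}_b|$ (only the $U_1$-middles), so the net outflow of tokens from $A_b$ across $\sigma$ equals $1$. The only edges of $G'$ leaving $A_b$ run from the subdivision vertex adjacent to some $e \in \C{E}_b$ into $e$ itself; hence at least one such outgoing slide must occur, placing a token on an element of $\C{E}_b$. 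Since the sets $\C{E}_b$ are pairwise disjoint, this gives $|\C{E}^*| \geq \binom{k}{2}$. Sandwiching forces $|\C{V}| = k$ and $G[\C{V}] \cong K_k$; as $G$ has no edges within any color class, $\C{V}$ contains exactly one vertex per color, yielding a multicolored clique in $G$. The most delicate step is the cut argument for $\C{E}^* \cap \C{E}_b \neq \emptyset$: one must arrange $A_b$ so that its only outgoing edges land in $\C{E}_b$ and then convert the net outflow $|S \cap A_b| - |T \cap A_b| = 1$ into an actual boundary-crossing slide, ruling out any cross-label shortcuts.
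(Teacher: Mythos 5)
Your proof is correct, and it takes a genuinely different route from the paper. The paper's own proof is an ad hoc accounting argument: it enumerates categories of ``mandatory'' slides (moving tokens out of $X$, clearing and refilling $U_1$, bringing tokens into $Y$, crossing through $E$, etc.), sums them to at least $8\binom{k}{2}$, and then reasons informally about what can be done with the remaining budget of $2k$. You instead introduce a vertex potential $\phi$ that is $1$-Lipschitz on $G'$, constant exactly on the $V$--$E$ and $V$--$Z$ edges, and whose total over $T$ exceeds that over $S$ by $8\binom{k}{2}$; this converts the paper's case analysis into a single clean inequality $2n+z\le 2k$ that automatically charges all ``bookkeeping'' slides to the $0$- and $(-1)$-slides. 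Two consequences then fall out mechanically: the bound $|\mathcal{V}|\le k$ from the parity of $0$-slides at each touched vertex of $V$, and the upper bound $|\mathcal{E}^*|\le\binom{k}{2}$ via the observation that occupying an edge-vertex forces both $G$-endpoints into $\mathcal{V}$. Your matching lower bound $\mathcal{E}^*\cap\mathcal{E}_b\ne\emptyset$ uses a cut argument on $A_b$ (the net outflow of $|S\cap A_b|-|T\cap A_b|=1$ across a boundary that only meets $\mathcal{E}_b$) rather than the paper's informal claim that ``every $\mathcal{E}_{ij}$ has at least one token at some point of time.'' The sandwich then yields $|\mathcal{V}|=k$, $G[\mathcal{V}]\cong K_k$, and multicoloredness since $G$ has no monochromatic edges. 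I checked the details --- the potential values, the identification of $\phi$-constant edges, the disjointness of the $A_b$'s, the boundary of $A_b$ landing only in $\mathcal{E}_b$, and the token-parity argument --- and they all hold. The main thing your approach buys is modularity and transparency: each inequality is crisp and the budget $8\binom{k}{2}+2k$ is explained in one line ($\Phi(T)-\Phi(S)=8\binom{k}{2}$ plus $2k$ of slack), whereas the paper's accounting would require substantially more justification to be made airtight (e.g., why the various slide categories it lists are disjoint). The cost is the upfront verification that $\phi$ is well-defined and Lipschitz, which is routine here.
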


\begin{proof}
Let the reconfiguration sequence be $I_0,I_1,I_2,\ldots,I_\ell$ where $I_0 = S$, $I_\ell = T$, and $\ell \leq 8{k\choose 2}+2k$.
    
We need at least one step for moving out each token in $X$. 
This requires a total of ${k\choose 2}$ slides. In order to move out a token from $X$, we need to move out a token on $U_1$ on at least one of the 
paths connecting that vertex in $X$ to $\C{E}_{ij}$. This again requires at least ${k\choose 2}$ slides. The tokens moved out from $U_1$ need to be replaced, which 
requires at least ${k\choose 2}$ slides. 
    
Since in the initial configuration $S$ all tokens are at a distance of at least $2$ from the vertices in $Y$, we need at least $2$ slides to bring a token into a vertex in $Y$. 
This amounts to a total of $2{k\choose 2}$ slides. Now, consider the vertex $v$ from which a token is moved to a vertex in $Y$. 
The neighbour of $v$, say $u$, originally had a token, which must have been moved out for placing a token on $v$. This token needs to be replaced by moving in a token 
from an adjacent vertex in $N(E)$, say $v'$. Bringing a token to $v'$ and then moving it to $u$ requires at least $2$ slides. 
So, we get a total of at least $2{k\choose 2}$ slides.
    
The only way to move tokens out of $X \cup N[U_1]$ is through $E$. 
Thus, ${k\choose 2}$ tokens have to be moved out of $X \cup N[U_1]$ to $E$. Every $\C{E}_{ij}$ has at least one token at some point of time. 
This requires another ${k\choose 2}$ slides. In total we have taken up at least $8{k\choose 2}$ slides. So, we are left with a budget of at most $2k$.
    
Any token moved out of $V$ either needs to be brought back or replaced. Both of these require $2$ slides at least. So, we can move out at most $k$ tokens from $V$.
    
Whenever a token is to be moved out of $X \cup N[U_1]$ to a vertex $v_e$ (corresponding to an edge $e$ in $G$) in $\C{E}_{ij}$ for some $i$ and $j$, the tokens 
on the $2$ vertices incident on $e$ must be moved out of $V$. We need to move out all the ${k\choose 2}$ tokens from $X \cup N[U_1]$, one token through each 
of the ${k\choose 2}$ sets $\C{E}_{ij}$. Therefore, we should move out the tokens from $V$ which are adjacent to the vertices in $\C{E}_{ij}$ sets to which 
the tokens from $X \cup N[U_1]$ are moved. Thus for the ${k\choose 2}$ vertices in $E$ (one in each $\C{E}_{ij}$), we can have at most $k$ neighbours in $V$.
    
Let us consider the subgraph induced in $G$ by these set of vertices in $V$. It has at most $k$ vertices and exactly ${k\choose 2}$ edges. 
Now a graph having ${k\choose 2}$ edges must have at least $k$ vertices. So, the induced subgraph has exactly $k$ vertices and forms a clique such that every 
edge has a distinct pair of colors on their incident vertices. This set of vertices in $V$ give us a solution to the \textsc{Multicolored Clique} instance.
\end{proof}

The combination of~\Cref{lem:l1,lem:l2} give us the following: 

\begin{theorem}
\textsc{Token Sliding Optimization} parameterized by $\ell$ is \WOH~on $2$-degenerate graphs.
\end{theorem}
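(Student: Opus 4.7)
The plan is to observe that the theorem is essentially a corollary of the construction and the two correctness lemmas established above. The reduction takes an instance $(G,k,(V_1,\ldots,V_k))$ of \textsc{Multicolored Clique} and produces, in polynomial time, a \textsc{TSO} instance $(G', S, T, \kappa, \ell)$ with budget $\ell = 8\binom{k}{2}+2k$. Crucially, the new parameter $\ell$ depends only on the old parameter $k$, so the reduction is an FPT-reduction with respect to the relevant parameterizations.

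Lemma~\ref{lem:ts-degenerate} places the reduced instance in the required graph class by guaranteeing that $G'$ is $2$-degenerate. Lemma~\ref{lem:l1} delivers the forward direction, exhibiting an explicit reconfiguration sequence of length exactly $8\binom{k}{2}+2k$ obtained from any multicolored clique, while Lemma~\ref{lem:l2} delivers the backward direction by decoding any reconfiguration sequence of length at most $\ell$ into a multicolored clique of $G$.

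Combining these three lemmas with the known fact that \textsc{Multicolored Clique} is \WOH\ when parameterized by $k$ then yields the claim. The main obstacle, already handled inside Lemma~\ref{lem:l2}, was the tight slide-budget accounting that forces any reconfiguration sequence of length at most $8\binom{k}{2}+2k$ to visit exactly $k$ vertices of $V$ and $\binom{k}{2}$ vertices of $E$ in a way that corresponds to a clique with one endpoint per color class; the final theorem is then a routine assembly of these components and does not require additional work beyond invoking them.
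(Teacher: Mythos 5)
Your proposal is correct and matches the paper's approach exactly: the paper also derives the theorem directly as the combination of Lemma~\ref{lem:ts-degenerate} (to place $G'$ in the $2$-degenerate class), Lemma~\ref{lem:l1}, and Lemma~\ref{lem:l2}, with the observation that $\ell = 8\binom{k}{2}+2k$ depends only on $k$ making this an FPT-reduction from \textsc{Multicolored Clique}.
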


\section{Hardness of \textsc{TJO} parameterized by $\ell$ on $2$-degenerate graphs}\label{sec:hard-tj}

We now show that \textsc{TJO} parameterized by $\ell$ is \WOH~on $2$-degenerate graphs via a reduction from the \textsc{Clique} problem, known to be \WOH.
We construct an instance $(G', S, T, \kappa, 2{k\choose 2}+{k\choose 2}^2+2k)$ of \textsc{TJO} starting from a \textsc{Clique} instance $(G,k)$. 
The construction is quite similar to that of the sliding variant but with some adaptation to account for the possibility of tokens jumping anywhere in the graph. 

\subparagraph*{Construction of $G'$.} 
We subdivide all the edges in $G$. Let the vertex set of $G$ be $V$. We let the set of vertices in $G'$ corresponding to the edges be denoted by $E$.
We introduce a biclique with parts $L$ and $R$, each of size ${k\choose 2}$. 
Next, we subdivide all the edges of the biclique twice. Let this entire set of vertices, i.e. $L \cup N(L) \cup N(R) \cup R$ be denoted by $X$. 
The vertices in $X$ do not have edges with those in $E$ or $V$. 
We also add a vertex corresponding to each vertex in $V$ and add an edge between the two.
Let this set of vertices be $Z$. The induced subgraph of $G'$ having $V\cup Z$ as its vertex set forms a perfect matching.
Our initial independent set $S = V \cup L \cup N(R)$ and our target independent set $T = V \cup R \cup N(L)$. 
Note that $|S| = |T| = \kappa$. 
We let $\ell = 2{k\choose 2} + {k\choose 2}^2 + 2k$.

\begin{figure}
\centering
\includegraphics[scale=0.5]{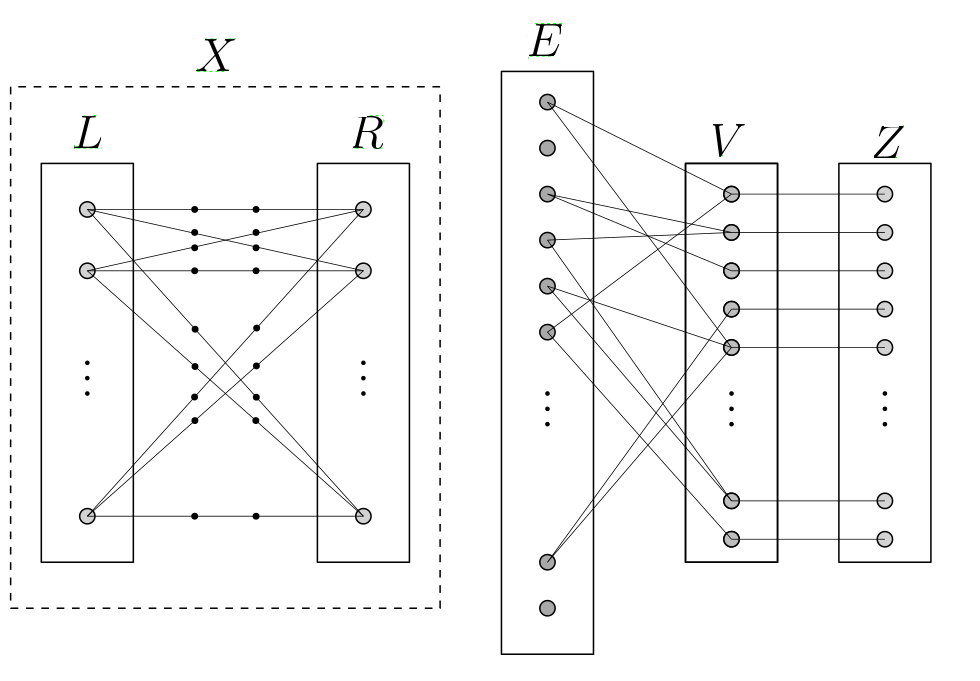}
\caption{An illustration of the reduction from $(G,k)$ to $(G', S, T, \kappa, 2{k\choose 2}+{k\choose 2}^2+2k)$.}
\label{fig:replacetj}
\end{figure}

\begin{lemma}\label{lem:tj-degenerate}
The graph $G'$ is $2$-degenerate.
\end{lemma}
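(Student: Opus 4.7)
The plan is to mimic the degeneracy argument used for the token-sliding reduction (Lemma~\ref{lem:ts-degenerate}), verifying that in $G'$ a large pool of low-degree vertices is available. First I would catalogue the degrees of the different ``ingredient'' vertex classes of $G'$: each vertex of $Z$ has degree $1$ (only adjacent to its matched partner in $V$); each subdivision vertex of an edge in $E$ has degree exactly $2$ (its two endpoints in $V$); and each vertex in $N(L) \cup N(R)$, being an internal vertex of a twice-subdivided edge of the biclique, has degree exactly $2$. The remaining vertex classes ($V$, $L$, $R$) may have high degree, but they are the ones we will peel off last.

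Next I would fix an arbitrary induced subgraph $H$ of $G'$ and exhibit a vertex of degree at most $2$ in $H$. Since the degree bounds above are in $G'$, they transfer to any induced subgraph, so if $V(H)$ meets $Z \cup E \cup N(L) \cup N(R)$ then we are immediately done: any such vertex has degree at most $2$ in $G'$ and hence at most $2$ in $H$.

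The only remaining case is $V(H) \subseteq V \cup L \cup R$. Here the key observation is that the construction inserts subdivision vertices on every edge incident to $V$ (via $E$, $Z$) and on every edge incident to $L \cup R$ (via $N(L)$, $N(R)$), and that there are no direct edges between $X$ and $V \cup E$. Consequently, in $G'$ there are no edges inside $V$, no edges inside $L \cup R$, and no edges between $V$ and $L \cup R$. So every vertex of $H$ is isolated, i.e.\ of degree $0$, and we are done.

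The proof is almost entirely bookkeeping; the main (mild) obstacle is simply to be precise that twice-subdividing the biclique edges, together with separately subdividing the $G$-edges and the matching to $Z$, forces every path between two ``high-degree'' vertices of $G'$ to pass through a degree-$2$ intermediate vertex. Once this is spelled out the case analysis above closes the argument.
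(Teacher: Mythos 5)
Your proof is correct and follows essentially the same approach as the paper's: show that $Z$, $E$, $N(L)$, $N(R)$ all have degree at most $2$ in $G'$, and that what remains ($V \cup L \cup R$) induces an edgeless graph. One small slip in your closing remark: the $V$--$Z$ matching edges are \emph{not} subdivided in the construction (each $Z$-vertex is directly matched to its $V$-partner), but this is harmless since $Z$-vertices already have degree $1$; otherwise your argument is if anything a bit tighter than the paper's, since it explicitly covers the case where $V(H)$ consists only of $V$-vertices.
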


\begin{proof}
Consider any induced subgraph $H$ of $G'$. If $H$ contains a vertex of $Z$ or a vertex from $N(L) \cup N(R)$ then we are done; 
as those vertices have degree at most two in $G'$. Otherwise, we know that $H$ either contains an isolated vertex from $L \cup R$ or a degree-two vertex from $E$, as needed. 
\end{proof}

\begin{lemma}\label{lem:l3}
If $(G,k)$ is a yes-instance of \textsc{Clique} then there is a reconfiguration sequence of length at most $\ell$ from $S$ to $T$ in $G'$.
\end{lemma}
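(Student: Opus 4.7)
The plan is to convert a given $k$-clique $\{v_1,\ldots,v_k\}$ in $G$ into an explicit reconfiguration sequence in $G'$ of length exactly $\ell=2\binom{k}{2}+\binom{k}{2}^2+2k$. The crucial structural observation is that the $\binom{k}{2}$ subdivision vertices $\hat e_1,\ldots,\hat e_{\binom{k}{2}}\in E$ corresponding to the clique edges have all of their $G'$-neighbors confined to $\{v_1,\ldots,v_k\}$; hence, once these $k$ vertices of $V$ are temporarily emptied, all of the $\hat e_i$'s can simultaneously be used as parking spots for tokens. This is precisely what lets me free up $L$, which is the bottleneck preventing the $N(R)$-tokens from being relocated onto $N(L)$ along the subdivided biclique paths.

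I organize the sequence into five phases. In the first phase, using $k$ jumps, I move each token from $v_i$ to its matched partner $z_i\in Z$. In the second phase, using $\binom{k}{2}$ jumps, I move each token in $L$ to a distinct $\hat e_i$; this is legal because the $V$-neighbors of each $\hat e_i$ are now unoccupied. In the third phase, for each subdivided biclique edge $l - a - b - r$ with $a\in N(L)$ and $b\in N(R)$, I jump the token from $b$ to $a$, for a total of $\binom{k}{2}^2$ jumps; this is legal because $a$'s only neighbors in $G'$ are $l$ (empty after the second phase) and $b$ (which I am vacating). In the fourth phase, using $\binom{k}{2}$ jumps, I move each token from $\hat e_i$ to a distinct $r_i\in R$; this is legal because every neighbor of $R$ lies in $N(R)$, which is now empty. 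In the fifth phase, using $k$ jumps, I move the tokens from $Z$ back to $v_1,\ldots,v_k$. Summing gives exactly $k+\binom{k}{2}+\binom{k}{2}^2+\binom{k}{2}+k=\ell$ jumps, and the final configuration is $V\cup R\cup N(L)=T$.

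The only nontrivial bookkeeping is verifying that every intermediate token configuration is an independent set of size $\kappa$, but the layered structure of $G'$ makes this straightforward: the ``layers'' $V$, $Z$, $L$, $N(L)$, $N(R)$, $R$, $E$ interact only through the matched $V$--$Z$ edges, the $V$--$E$ incidences, and the subdivided biclique paths from $L$ through $N(L)$ and $N(R)$ to $R$. At each phase, every newly occupied vertex has had all of its neighbors emptied in a previous phase by design, so no independence violation can arise; the total token count is preserved because every phase is a sequence of jumps. The main conceptual point to emphasize in the write-up is thus the choice of the $\hat e_i$'s in phase two, since this is where the $k$-clique hypothesis is used.
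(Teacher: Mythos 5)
Your proof is correct and follows essentially the same five-phase plan as the paper (move clique vertices into $Z$, park $L$-tokens on the clique-edge subdivision vertices $E_{\mathcal C}$, slide $N(R)$-tokens over to $N(L)$, move the parked tokens into $R$, return the $Z$-tokens to $V$). The only difference is expository: you spell out the independence-preservation check at each phase, which the paper leaves implicit, but the underlying construction of the sequence is the same.
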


\begin{proof}
Let the solution to the \textsc{Clique} instance be $\{v_1,v_2,\ldots,v_k\} \subseteq V$. 
Consider the following reconfiguration sequence from $S$ to $T$:
\begin{enumerate}
\item Jump each token on $v_i$ to its matched neighbour in $Z$; for a total of $k$ jumps.
\item Since the vertices $\{v_1,v_2,\ldots,v_k\}$ form a clique in $G$, there are ${k\choose 2}$ edges in the subgraph induced on those vertices. 
Let the set of corresponding vertices in $E$ be $E_\C{C}$. We jump all the ${k\choose 2}$ tokens from $L$ to the vertices in $E_\C{C}$; for a total of ${k\choose 2}$ jumps.
\item Jump all the tokens in $N(R)$ to their adjacent vertex in $N(L)$; for a total of ${k\choose 2}^2$ jumps.
\item Now jump all the tokens in $E_\C{C}$ to $R$; for a total of ${k\choose 2}$ jumps.
\item Finally we jump the tokens in $Z$ back to $V$; for a total of $k$ jumps.
\end{enumerate}
The length of the reconfiguration sequence is $2{k\choose 2}+{k\choose 2}^2+2k$. This completes the proof.
\end{proof}

\begin{lemma}\label{lem:l5}
The first time a token jumps to $R$, there can be at most ${k\choose 2}^2$ tokens in the structure $X$.
\end{lemma}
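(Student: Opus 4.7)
The plan is to analyze the independent set $I'$ in the reconfiguration sequence immediately before the jump that first places a token on $R$, and to suppose this jump targets some $r_1 \in R$. Two structural conditions then hold on $I'$: first, $I' \cap R = \emptyset$, since by assumption no earlier jump has placed a token on $R$; second, $N_{G'}(r_1) \cap I' = \emptyset$, since the resulting set (obtained by removing the source token and adding $r_1$) must remain independent. In the construction of $G'$, the neighborhood of $r_1$ is precisely $\{y_{l r_1} : l \in L\} \subseteq N(R)$, the set of ${k\choose 2}$ inner subdivision vertices adjacent to $r_1$.

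Given these two conditions, the plan is to bound $|I' \cap X|$ via a pair-wise decomposition of $X = L \cup N(L) \cup N(R) \cup R$. For each pair $(l, r) \in L \times R$, the 2-subdivided edge between $l$ and $r$ forms a $P_4$ on vertices $l, x_{lr}, y_{lr}, r$, and every vertex of $N(L) \cup N(R)$ belongs to exactly one such $P_4$. Since $x_{lr}$ and $y_{lr}$ are adjacent, $|I' \cap \{x_{lr}, y_{lr}\}| \leq 1$ for every pair. The key sharpening is that whenever $l \in I' \cap L$ and $r = r_1$, both $x_{l r_1}$ and $y_{l r_1}$ are forbidden in $I'$---the former because it is adjacent to $l$, the latter because it lies in $N_{G'}(r_1)$---so the pair's contribution collapses to $0$. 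The number of such doubly-forbidden pairs is exactly $|I' \cap L|$, while all remaining pairs each contribute at most $1$.

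Summing the per-pair bounds over all ${k\choose 2}^2$ pairs yields $|I' \cap (N(L) \cup N(R))| \leq {k\choose 2}^2 - |I' \cap L|$. Combined with $|I' \cap R| = 0$, this gives $|I' \cap X| = |I' \cap L| + |I' \cap (N(L) \cup N(R))| + |I' \cap R| \leq {k\choose 2}^2$, as claimed. The argument is mostly mechanical once the two structural conditions on $I'$ are in place; the only subtle point is recognizing that the jump-validity constraint $N_{G'}(r_1) \cap I' = \emptyset$ combines with each occupied $l \in I' \cap L$ to annihilate an entire row of pair-contributions, which is precisely what shaves off the ${k\choose 2}$ needed to improve the naive upper bound ${k\choose 2} + {k\choose 2}^2$ (the maximum independent set size in $X$) down to ${k\choose 2}^2$.
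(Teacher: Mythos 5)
Your proof takes the same route as the paper: decompose $N(L)\cup N(R)$ into the ${k\choose 2}^2$ pairs $\{x_{lr},y_{lr}\}$, observe that each pair carries at most one token, and then show that the $|I'\cap L|$ pairs in the ``row'' of $r_1$ carry none, which shaves off exactly $|I'\cap L|$. This is precisely what the paper does (it, too, argues that the $y$ pairs from occupied $L$-vertices to $v$ are empty and the remaining ${k\choose 2}^2-y$ paths carry at most one token each), and in fact your write-up states the per-pair bookkeeping more cleanly than the paper's somewhat garbled phrase about ``${k\choose 2}^2-y$ paths from the remaining vertices in $L$ to $R\setminus\{v\}$.''

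One small caveat, which the paper's proof shares rather than avoids: the step ``$N_{G'}(r_1)\cap I'=\emptyset$, since the resulting set must remain independent'' is not literally forced by the jump's validity. The jump only guarantees $(I'\setminus\{u\})\cap N(r_1)=\emptyset$ where $u$ is the source of the jumping token, so if the source is itself a neighbour of $r_1$ (some $y_{l_0 r_1}\in N(R)$) then $N(r_1)\cap I'=\{y_{l_0 r_1}\}$ is nonempty. In that case, if additionally $l_0\in I'\cap L$, the pair $(l_0,r_1)$ contributes $1$ rather than $0$, and the count inflates by one. The paper's proof also silently assumes the source is outside $N(v)$, so this is not a point where you deviated from the intended argument; it is an edge case that neither proof handles explicitly. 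It does not affect the downstream use of the lemma in a meaningful way (the budget accounting in the subsequent lemma is slack enough), but if you want an airtight version you should either state the claim about $N(r_1)$ as holding for $I'\setminus\{u\}$ and count the source separately, or argue that one may assume w.l.o.g.\ the source lies outside $X$.
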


\begin{proof}
Let us assume that the first time a token jumps to $R$ there are ${k\choose 2}^2+1$ tokens in $X$. 
Also, let the vertex in $R$ where a token is to be moved be $v$. If $y > 0$ of these tokens are in $L$ and none of them are in $R$ then 
there cannot be any tokens on the vertices of the subdivided edges from those $y$ vertices in $L$ to $v$. So, we have at most $y$ tokens on the vertices in 
$L$ and those on the paths from them to $v$. We have ${k\choose 2}^2-y$ paths from the remaining vertices in $L$ to the vertices in $R\setminus\{v\}$. 
These paths can have at most ${k\choose 2}^2-y$ tokens. Thus, in total we can have at most $y+{k\choose 2}^2-y={k\choose 2}^2$ tokens in $X$. 
This leads us to a contradiction, which completes the proof. 
\end{proof}

\begin{lemma}\label{lem:l4}
If there is a reconfiguration sequence of length at most $\ell$ from $S$ to $T$ in $G'$ then $(G,k)$ is a yes-instance of \textsc{Clique}.
\end{lemma}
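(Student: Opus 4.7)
My plan is to reverse-engineer a $k$-clique of $G$ from any reconfiguration sequence $I_0, I_1, \ldots, I_\ell$ of length at most $\ell = 2\binom{k}{2} + \binom{k}{2}^2 + 2k$. The key idea is to locate a critical moment in the sequence where Lemma~\ref{lem:l5} forces many tokens into the edge-subdivision vertices $E$, whose endpoints in $G$ must then form a clique by a counting argument.

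First, let $t^*$ be the index of the first jump whose destination lies in $R$ and set $I = I_{t^*-1}$. By Lemma~\ref{lem:l5}, $|X \cap I| \leq \binom{k}{2}^2$. Since $v \in V$ and its unique neighbour $z_v \in Z$ cannot both hold tokens, $|V \cap I| + |Z \cap I| \leq |V|$. Combined with $|I| = \kappa = |V| + \binom{k}{2} + \binom{k}{2}^2$, this yields $|E \cap I| \geq \binom{k}{2}$. Let $\mathcal{E} = E \cap I$; each vertex of $\mathcal{E}$ corresponds to an edge of $G$, and let $V_1 \subseteq V$ be the set of $G$-endpoints of all edges corresponding to vertices in $\mathcal{E}$. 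By independence of $I$, every vertex of $V_1$ is empty in $I$; and since $G[V_1]$ has at least $|\mathcal{E}| \geq \binom{k}{2}$ edges while any graph on fewer than $k$ vertices has at most $\binom{k-1}{2} < \binom{k}{2}$ edges, we have $|V_1| \geq k$.

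The heart of the proof is then to show the sequence must contain at least $2|V_1| + 2\binom{k}{2} + \binom{k}{2}^2$ jumps, by combining the following mandatory contributions from essentially disjoint groups of jumps: (i) each vertex of $V_1$ must be vacated before $t^*$ and refilled after $t^*-1$, giving at least $2|V_1|$ jumps whose source or destination is a vertex of $V_1 \subseteq V$; (ii) at least $\binom{k}{2}$ jumps leave $L$ and at least $\binom{k}{2}$ enter $R$, contributing at least $2\binom{k}{2}$ distinct jumps because any direct $L \to R$ short-cut still requires first freeing the $\binom{k}{2}$ neighbours of the target $r$ in $N(R)$; and (iii) the $\binom{k}{2}^2$ tokens of $N(R)$ must leave and the $\binom{k}{2}^2$ vertices of $N(L)$ must be filled, contributing at least $\binom{k}{2}^2$ distinct jumps, attained by direct $N(R) \to N(L)$ jumps. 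Since $V_1 \subseteq V$ is disjoint from the four $X$-layers and $V$ has no edges to $X$, no single jump contributes to two of these groups. Combining with the budget $\ell \leq 2\binom{k}{2} + \binom{k}{2}^2 + 2k$ forces $|V_1| \leq k$, hence $|V_1| = k$, and $G[V_1]$ is then a $k$-vertex graph with at least $\binom{k}{2}$ edges, so it must be a complete graph on $k$ vertices, yielding the desired clique.

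The main obstacle in the plan is the disjointness analysis in steps (ii)--(iii): although $V_1 \subseteq V$ is disjoint from $X$'s layers, individual jumps such as $V \to R$, $V \to N(L)$, $L \to V$, or $L \to R$ can in principle lie in two of the three groups simultaneously, and one must leverage the specific graph structure (in particular the fact that $V \to R$ requires all $\binom{k}{2}$ $N(R)$-neighbours of its target to be empty, and that any $V \to E$ short-cut in group~(i) is compensated for by an additional $E \to V$ or $E \to R$ jump elsewhere) to argue that these short-cut merges cannot reduce the total below $2|V_1| + 2\binom{k}{2} + \binom{k}{2}^2$.
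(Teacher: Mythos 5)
Your plan is structurally close to the paper's proof in spirit: both arguments fix the critical moment $t^{*}$ when a token first enters $R$, invoke Lemma~\ref{lem:l5} to bound the number of tokens in $X$ at that moment, deduce that at least $\binom{k}{2}$ tokens sit in $E$ at time $t^{*}-1$, and then use a jump-counting budget to force the $V$-endpoints of those edges to number exactly $k$, at which point the $\binom{k}{2}$-edges-on-$k$-vertices argument yields a clique. Your step of identifying $V_1$ via the counting $|E\cap I|\geq\binom{k}{2}$ and then showing $|V_1|\geq k$ is clean and correct, and is a nice reformulation of the paper's ``for the $\binom{k}{2}$ vertices in $E$ we can have at most $k$ neighbours in $V$'' step.

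However, there is a genuine gap, and you have correctly identified it yourself: the disjointness of the three jump families is not established, and it is not only between (ii) and (iii) that merging is a problem. A single jump such as $v\to r$ (with $v\in V_1$, $r\in R$), $v\to n_l$ (with $n_l\in N(L)$), $l\to v$, or $n_r\to v$ lies simultaneously in group~(i) and one of (ii)/(iii). Since your whole argument hinges on $\ell\geq 2|V_1|+2\binom{k}{2}+\binom{k}{2}^2$ with $\ell$ matching this quantity exactly at $|V_1|=k$, you have zero slack, and even a single merged jump destroys the bound. Ruling out such merges requires a structural case analysis (essentially: why, near the critical moment, can tokens not profitably shortcut from $V$ directly into $X$, and why must the two endpoints of each used $E$-vertex each be separately vacated and separately refilled). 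The paper handles this with its Cases~I--III, which reduce all escape routes out of $X$ to the ``through $E$'' case and argue that each $V$-vertex vacated costs at least two jumps that are not charged to $X$. You would need an analogous argument showing, for instance, that a jump with destination in $R$ cannot have source in $V_1$ because at that moment $N(R)$-neighbours of the target must already be empty, and tracing the consequences. Until that analysis is carried out, the proof is incomplete: the inequality $\ell\geq 2|V_1|+2\binom{k}{2}+\binom{k}{2}^2$ is plausible but not proven, and it is exactly the crux of the lemma.
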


\begin{proof}
Let the reconfiguration sequence be $I_0,I_1,I_2,\ldots,I_\ell$, where $I_0 = S$,  $I_\ell = T$, and $\ell \leq 2{k\choose 2}+{k\choose 2}^2+2k$. 
None of the tokens in $X$ have the same position in both $S$ and $T$. Hence, all of them have to jump at least once. 
This accounts for ${k\choose 2}+{k\choose 2}^2$ steps. From~\Cref{lem:l5}, we know that no tokens can be moved into $R$ until we have no more 
than ${k\choose 2}^2$ tokens left in $X$.
This implies that at least ${k\choose 2}$ of the tokens in $X$ have to be moved out to either $E$, $V$, or $Z$. 
When we are about to move a token into $R$ for the first time, we can have at most ${k\choose 2}^2$ tokens in $X$. 
So, an extra ${k\choose 2}$ tokens have to be moved into $X$, which takes at least ${k\choose 2}$ steps. 
Therefore, we are left with a budget of at most $2k$. We consider the following three cases while jumping tokens out of $X$:

\begin{itemize}
\item {\bf Case I: } If a token from $X$ jumps to some vertex $v_e$ in $E$, the tokens on the two neighbouring vertices of $v_e$ in $V$ should 
have been moved out to $E$ or $Z$ (we are not considering $X$, as moving a token from $V$ to $X$ in order to shift a token out of $X$ does not help).
    
\item {\bf Case II: } If a token from $X$ is to be moved to a vertex $v_z$ in $Z$, the token on the neighbouring vertex of $v_z$ in $V$ needs to be moved out to $E$ or $Z$. 
If that is to be moved to some vertex in $Z$, then the token on the neighbour of the matched vertex in $V$ needs to be jumped out of $V$. 
Again if that token is to be moved out to $Z$, the sequence continues. At most $n-1$ tokens can be jumped out of $V$ to $Z$, because the initial token 
from $X$ would occupy one vertex in $Z$. So, after at most $n-1$ steps in the sequence, the token must be moved out of $V$ to some vertex $v_e$ in $E$. 
Now, the tokens on the two neighbours of $v_e$ in $V$ must have been moved out of $V$ prior to the above sequence of jumps. 
Thus, we can consider a shorter reconfiguration sequence where the token from $X$ is directly jumped to $v_e$ after shifting the tokens on its neighbours in $V$. 
Then it becomes similar to Case I.
    
\item {\bf Case III: } If a token from $X$ is to be moved to a vertex $v_v$ in $V$, the token on $v_v$ needs to be moved out to $E\setminus N(v_v)$ or $Z\setminus N(v_v)$. 
If that is to be moved to some vertex in $Z\setminus N(v_v)$, then the token on the neighbour of the matched vertex in $V$ needs to be jumped out of $V$. 
Again if that token is to be moved out to $Z$, the sequence continues. At most $n-1$ tokens can be jumped out of $V$ to $Z$, because we cannot place a token 
on the neighbour of $v_v$ in $Z$. So, after at most $n-1$ steps in the sequence, the token must be moved out of $V$ to some vertex $v_e$ in $E$. 
Now, the tokens on the two neighbours of $v_e$ in $V$ must have been moved out of $V$ prior to the above sequence of jumps. Thus, we can consider a 
shorter reconfiguration sequence where the token from $X$ is directly jumped to $v_e$ after shifting the tokens on its neighbours in $V$. 
Thus, it becomes similar to Case I.
\end{itemize}
    
By the above case analysis, it is sufficient to consider that Case I holds for our reconfiguration sequence. 
We need to move out at least ${k\choose 2}$ tokens from $X$ through the vertices in $E$. 
Therefore, we should move out the tokens from $V$ which are adjacent to the vertices in $E$ to which the tokens from $X$ are moved. 
While we are moving tokens out of $X$, any token moved out of $V$ eventually needs to be moved to $X$ or brought back to $V$. 
Both of these take at least $2$ steps because we directly cannot jump a token from $V$ to $X$ until $X$ contains at most ${k\choose 2}^2$ tokens. 
So, we can move out at most $k$ tokens from $V$. Thus for the ${k\choose 2}$ vertices in $E$, we can have at most $k$ neighbours in $V$.
    
Let us consider the subgraph in $G$ induced by this set of vertices in $V$. It has at most $k$ vertices and exactly ${k\choose 2}$ edges. 
Now a graph having ${k\choose 2}$ edges must have at least $k$ vertices. So, the induced subgraph has exactly $k$ vertices and forms a clique. 
This set of vertices in $V$ give us a solution to the \textsc{Clique} instance, as needed.
\end{proof}

The combination of~\Cref{lem:l3,lem:l4,lem:l5} give us the follwing:

\begin{theorem}
\textsc{Token Jumping Optimization} parameterized by $\ell$ is \WOH~on $2$-degenerate graphs.
\end{theorem}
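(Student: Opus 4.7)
The plan is to establish W[1]-hardness by a parameter-preserving reduction from \textsc{Clique} (parameterized by the clique size $k$), using the construction of $G'$ from $(G,k)$ already described. Since the target parameter is $\ell$, I need to verify that $\ell = 2\binom{k}{2} + \binom{k}{2}^2 + 2k$ is a function of $k$ alone (which is clear), that $G'$ is $2$-degenerate, and that the reduction is correct in both directions. The $2$-degeneracy of $G'$ follows by inspecting any induced subgraph: vertices in $Z$ and in $N(L)\cup N(R)$ always have degree at most two in $G'$, vertices in $E$ have degree three but lose their two $V$-neighbors easily, and removing those leaves isolated vertices in $L\cup R$. This is essentially Lemma~\ref{lem:tj-degenerate}.

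\textbf{Forward direction (Clique $\Rightarrow$ short sequence).} Given a clique $\{v_1,\ldots,v_k\}$ in $G$, I would exhibit an explicit reconfiguration sequence that attains the budget $\ell$: first move the $k$ tokens of $V$ sitting on clique vertices ``aside'' to their matched partners in $Z$ (using $k$ jumps), thereby freeing up the $\binom{k}{2}$ subdivision vertices in $E$ that correspond to the clique edges. Then jump the $\binom{k}{2}$ tokens from $L$ onto those freed edge-vertices ($\binom{k}{2}$ jumps), slide the $\binom{k}{2}^2$ tokens of $N(R)$ over to $N(L)$ (this is possible because $L$ is now empty and takes $\binom{k}{2}^2$ jumps), then jump the tokens from the $E$-vertices to $R$ ($\binom{k}{2}$ jumps), and finally return the $k$ tokens from $Z$ to $V$. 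The total is exactly $\ell$. This is the content of Lemma~\ref{lem:l3}.

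\textbf{Backward direction (short sequence $\Rightarrow$ clique).} This is the main obstacle and requires two separate arguments. First, the structural Lemma~\ref{lem:l5}: at the moment the \emph{first} token ever enters $R$, the total token count on $X=L\cup N(L)\cup N(R)\cup R$ is at most $\binom{k}{2}^2$. I would prove this by fixing the target vertex $v\in R$ and noting that independence forbids tokens on any subdivided path between $v$ and occupied vertices of $L$; a counting argument over the $\binom{k}{2}^2$ length-three paths of the subdivided biclique yields the bound. Given this, Lemma~\ref{lem:l4} proceeds by a tight budget audit: every one of the $\binom{k}{2} + \binom{k}{2}^2$ tokens in $X$ must jump at least once because $S$ and $T$ disagree on $X$; by Lemma~\ref{lem:l5} at least $\binom{k}{2}$ tokens of $X$ must vacate $X$ before any token enters $R$, and each of those must later be replaced, costing an extra $\binom{k}{2}$ jumps. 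This leaves a residual budget of only $2k$ to shuffle tokens through $V$, $Z$, and $E$.

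\textbf{Extracting the clique.} With only $2k$ ``extra'' jumps available, I would argue by case analysis on where a token leaving $X$ can land. Direct jumps into $Z$ or $V$ can always be rerouted via $E$ with no extra cost (formalized as the three-case analysis in Lemma~\ref{lem:l4}), so without loss of generality every token leaving $X$ lands on some edge-vertex $v_e\in E$. But placing a token on $v_e$ forces the two neighboring vertices of $v_e$ in $V$ to be token-free at that moment; each such ``vacancy'' in $V$ costs at least two jumps (out and back, since direct $V\to X$ jumps are blocked until $|X|\leq \binom{k}{2}^2$). Hence at most $k$ distinct vertices of $V$ can ever be vacated, and these $k$ vertices must cover the endpoints of all $\binom{k}{2}$ chosen edge-vertices. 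A graph on at most $k$ vertices containing $\binom{k}{2}$ edges must be $K_k$, producing the required clique in $G$. Combining the three lemmas with the $2$-degeneracy and the fact that $\ell$ is a function of $k$ completes the W[1]-hardness reduction.
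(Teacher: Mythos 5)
Your proposal follows the paper's proof essentially verbatim: the same reduction from \textsc{Clique}, the same three-lemma decomposition (Lemma~\ref{lem:l3} for the forward direction, Lemma~\ref{lem:l5} for the structural bound on tokens in $X$ before $R$ is first entered, and Lemma~\ref{lem:l4} with its budget audit and three-case rerouting for the backward direction), together with Lemma~\ref{lem:tj-degenerate} for $2$-degeneracy. One minor slip: in the \textsc{TJO} construction, vertices in $E$ have degree exactly two (their only neighbors are the two $V$-endpoints of the subdivided edge, since $X$ is disconnected from $E\cup V$), not three; this does not affect the degeneracy argument.
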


\section{FPT algorithm for \textsc{TJR} parameterized by $k$}\label{sec:fpt-k-tj}
	
We propose a generalized scheme for solving \textsc{Token Jumping Reachability} parameterized by $k$ on graphs having a small $k$-independence covering family, i.e., a family 
of size $\C{O}(f(k)\cdot \text{poly}(n))$. Degenerate and nowhere dense graphs admit such independence covering families as shown in~\cite{DBLP:journals/talg/LokshtanovPSSZ20}. 

We remove all the sets in the covering family of size less than $k$. We find out if the independent sets $S$ and $T$ are a part of the 
independence covering family. If not, we add them to the family. Let the size of the resulting $k$-independence covering family $\C{F}(G,k)$ be $q$. 
For denoting an independent set in the family, we will use capital letters like, $X, Y (\subseteq V (G))$. 
We construct a graph $\C{G}$ with $q$ vertices corresponding to the $q$ sets in the family. 
Consider two independent sets $I$ and $I'$ in $\C{F}(G,k)$. We add an edge between the vertices $i$ and $i'$ in $\C{G}$ if and only if $|I \cap I'|\geq k-1$. 
Note that for any two $k$-sized independent sets $J$ and $J'$ we can find a trivial reconfiguration sequence from $J$ to $J'$ if both of them are contained in some 
$I$ in $\C{F}(G,k)$. 

In the algorithm, we find out if the vertices $i_s$ and $i_t$ in $\C{G}$ are in the same connected component. 
If yes, then we know that $S$ is reachable from $T$ from the construction of $\C{G}$. Otherwise no reconfiguration sequence from $S$ to $T$ exists.

%
%
%
%

\begin{lemma}\label{lem:reach1}
If there exists a path from $i_s$ to $i_t$ in $\C{G}$ then there is a reconfiguration sequence from $S$ to $T$ in $G$.
\end{lemma}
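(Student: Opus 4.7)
The plan is to show that any walk in $\C{G}$ translates directly into a reconfiguration sequence in $G$, since edges of $\C{G}$ correspond to pairs of independent sets that differ by at most one token. The argument will be a straightforward induction on the length of the walk.

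First I would unpack the definitions. After the preprocessing step, every set in $\C{F}(G,k)$ has size exactly $k$ and the vertex sets $S$ and $T$ correspond to vertices $i_s$ and $i_t$ of $\C{G}$. Let $i_s = i_0, i_1, \ldots, i_r = i_t$ be the given path in $\C{G}$, and let $I_j$ denote the $k$-sized independent set in $\C{F}(G,k)$ corresponding to vertex $i_j$. By definition of $\C{G}$, for each $j \in \{0, 1, \ldots, r-1\}$ we have $|I_j \cap I_{j+1}| \geq k-1$. Since $|I_j| = |I_{j+1}| = k$, either $I_j = I_{j+1}$ (in which case no jump is required), or there exist unique vertices $u_j \in I_j \setminus I_{j+1}$ and $v_j \in I_{j+1} \setminus I_j$.

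Next I would argue that each edge of the path can be realized by at most one valid token jump. In the non-trivial case, consider the intermediate set $I_j \setminus \{u_j\}$, which has size $k-1$ and is independent (as a subset of the independent set $I_j$); it is also equal to $I_{j+1} \setminus \{v_j\}$, which is a subset of the independent set $I_{j+1}$. Consequently the configuration $I_{j+1}$ is obtained from $I_j$ by jumping the token at $u_j$ to $v_j$, and the resulting set is independent since it is exactly $I_{j+1}$. This single step is a legal reconfiguration move under the token jumping rule.

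Finally I would concatenate these per-edge moves. Starting from $I_0 = S$, apply the (at most one) jump associated with each edge $\{i_j, i_{j+1}\}$ in order, producing the sequence $S = I_0, I_1, \ldots, I_r = T$ of independent sets of size $k$ in which consecutive sets are either equal or related by a single token jump. Deleting any repetitions yields a valid reconfiguration sequence from $S$ to $T$ in $G$, which proves the lemma.

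I do not anticipate any real obstacle here; the only point requiring minor care is handling the degenerate case $I_j = I_{j+1}$ and observing that the one-jump move between $I_j$ and $I_{j+1}$ is legal precisely because both endpoints are guaranteed to be independent sets in $G$.
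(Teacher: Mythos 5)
Your argument rests on the false premise that, after preprocessing, every set in $\C{F}(G,k)$ has size exactly $k$. The preprocessing step only discards members of the covering family of size \emph{less than} $k$; nothing bounds the sizes from above, and by Definition~\ref{def-family} the members $J$ of an independence covering family are merely required to \emph{contain} every independent set of size at most $k$, so they will typically have size much larger than $k$. Consequently, for two sets $I_j, I_{j+1}$ in the family the condition $|I_j \cap I_{j+1}| \geq k-1$ does \emph{not} imply that they differ by one vertex, and the ``unique $u_j, v_j$'' step collapses. More fundamentally, you conflate the vertex $i_j$ of $\C{G}$ (a potentially large independent set $I_j$) with a token configuration; the actual token placements in the reconfiguration sequence are $k$-sized subsets of the $I_j$'s, not the $I_j$'s themselves.

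The paper's proof handles this correctly by maintaining a $k$-sized independent set $X_j \subseteq I_j$ at step $j$, choosing a $(k-1)$-sized set $Y_j \subseteq I_j \cap I_{j+1}$, and first rearranging $X_j$ into $Z_j = Y_j \cup \{u_j\}$ using up to $k-1$ jumps \emph{inside} $I_j$ (which is safe because every intermediate configuration is a subset of the independent set $I_j$), and only then jumping $u_j$ to some vertex of $I_{j+1} \setminus Y_j$ to land in a $k$-sized subset $X_{j+1} \subseteq I_{j+1}$. So an edge of $\C{G}$ may require up to $k$ jumps, not one. Your single-jump-per-edge picture is only correct in the special case where every family member has size exactly $k$, which does not hold here.
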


\begin{proof}
Let $i_s=i_0, i_1, i_2, \ldots, i_\ell=i_t$ be the path from $i_s$ to $i_t$. 
We start the reconfiguration sequence with $S$. For each pair of vertices $i_j$ and $i_{j+1}$ in the path, we have $|I_j\cap I_{j+1}|\geq k-1$ according to the construction. 
Now, let $X_j \subseteq I_j$ be a $k$-sized independent set in the reconfiguration sequence and $Y_j \subseteq I_j\cap I_{j+1}$ be a $(k-1)$-sized set. 
Let $u_j$ be a vertex in $X_j$. We can obtain a $k$-sized independent set $Z_j = Y_j \cup \{u_j\}$ from $X_j$ by at most $k-1$ token jumps. 
Next we jump the token on $u_j$ to a vertex in $I_{j+1}\setminus Y_j$ to obtain a $k$-sized independent set $X_{j+1} \subseteq I_{j+1}$. 
This gives us a reconfiguration sequence from $S$ to $T$, as needed. 
\end{proof}

\begin{lemma}\label{lem:reach2}
If there is a reconfiguration sequence $S = I_0, I_1, I_2, \ldots, I_\ell = T$ then there exists a path from $i_s$ to $i_t$ in $\C{G}$.
\end{lemma}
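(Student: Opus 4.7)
The plan is to lift the reconfiguration sequence in $G$ to a walk in $\mathcal{G}$ by using the independence covering family property to cover each intermediate independent set by a member of $\mathcal{F}(G,k)$, and then check that consecutive covering sets are adjacent in $\mathcal{G}$. Concretely, let $S = I_0, I_1, \ldots, I_\ell = T$ be the given reconfiguration sequence, where each $I_j$ is an independent set of size exactly $k$ and $I_{j+1}$ is obtained from $I_j$ by jumping a single token. A single jump changes exactly one element, so $|I_j \cap I_{j+1}| = k - 1$ for every $j \in \{0, 1, \ldots, \ell - 1\}$.

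For each $j$, I would use the definition of an independence covering family to select a set $J_j \in \mathcal{F}(G, k)$ such that $I_j \subseteq J_j$. For the endpoints, since we explicitly added $S$ and $T$ to $\mathcal{F}(G, k)$ during the construction, we may take $J_0 = S$ and $J_\ell = T$, which are valid choices as $S \subseteq S$ and $T \subseteq T$. Note that because we removed all members of the family of size less than $k$, every $J_j$ has at least $k$ vertices, so the vertex $j_j \in V(\mathcal{G})$ corresponding to $J_j$ is actually present in $\mathcal{G}$.

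Next, I would verify that consecutive $J_j$ and $J_{j+1}$ are adjacent in $\mathcal{G}$. Since $I_j \subseteq J_j$ and $I_{j+1} \subseteq J_{j+1}$, we have $I_j \cap I_{j+1} \subseteq J_j \cap J_{j+1}$, and therefore $|J_j \cap J_{j+1}| \geq |I_j \cap I_{j+1}| = k - 1$. By the construction of $\mathcal{G}$, this is exactly the condition for there to be an edge between the corresponding vertices. Hence $j_0 = i_s, j_1, \ldots, j_\ell = i_t$ is a walk in $\mathcal{G}$ from $i_s$ to $i_t$, which contains a path between these vertices, as needed.

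The argument is essentially a direct translation, and I do not anticipate a serious obstacle; the only subtlety worth double-checking is that our preprocessing (removing small sets, adding $S$ and $T$) preserves the covering property for all $k$-sized independent sets on the sequence and guarantees that the endpoints correspond to genuine vertices of $\mathcal{G}$. Once this is verified, the inclusion $I_j \cap I_{j+1} \subseteq J_j \cap J_{j+1}$ together with the single-jump property immediately yields the adjacency condition and completes the proof.
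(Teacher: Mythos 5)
Your proof follows the same approach as the paper's: cover each intermediate $I_j$ by some $J_j \in \mathcal{F}(G,k)$, observe that a single jump forces $|I_j \cap I_{j+1}| = k-1$ and hence $|J_j \cap J_{j+1}| \geq k-1$, and then extract a path from the resulting walk from $i_s$ to $i_t$. The only small imprecision is the case $J_j = J_{j+1}$, where there is no self-loop so the sequence is not literally a walk; the paper handles this explicitly by noting that equal consecutive covering sets correspond to the same vertex and can be collapsed.
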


\begin{proof}
Let $I'_1, I'_2, \ldots, I'_{\ell-1}$ be the sets in the covering family such that $I_i \subseteq I'_i$ for $i\in [\ell-1]$. 
Since $|I_i\cap I_{i+1}|=k-1$, we have $|I'_i\cap I'_{i+1}|\geq k-1$. If $I'_i$ and $I'_{i+1}$ are the same set, then they correspond to the same vertex in $\C{G}$. 
Otherwise, they are connected by an edge according to the construction of $\C{G}$. We start from the vertex $i_s$ and follow the reconfiguration sequence until we reach $i_t$. 
This gives us a walk from $i_s$ to $i_t$ and a walk contains a path, as needed.
\end{proof}

The combination of~\Cref{lem:reach1,lem:reach2} give us the following: 

\begin{theorem}
\textsc{Token Jumping Reachability} parameterized by $k$ is fixed-parameter tractable on any graph 
class $\mathscr{C}$ for which we can, given any $n$-vertex graph $G \in \mathscr{C}$, compute a
$k$-independence covering family $\C{F}(G,k)$ of size $\C{O}(f(k) \cdot n^{\C{O}(1)})$ in time $\C{O}(g(k) \cdot n^{\C{O}(1)})$, where $f$ and $g$ are computable functions.
\end{theorem}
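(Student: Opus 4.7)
The plan is to carry out the algorithm sketched immediately before the theorem statement and verify that every step runs in FPT time, with correctness handed to us by Lemmas~\ref{lem:reach1} and~\ref{lem:reach2}. First I would invoke the hypothesis on $\mathscr{C}$ to compute a $k$-independence covering family $\C{F}(G,k)$ in time $\C{O}(g(k) \cdot n^{\C{O}(1)})$, discard every member of size strictly less than $k$ (since such a member cannot superset any $k$-sized independent set and is therefore useless), and add $S$ and $T$ to the family if they are not already present. Adding two sets preserves the covering property and the size bound; denote the size of the resulting family by $q = \C{O}(f(k) \cdot n^{\C{O}(1)})$.

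Next I would build the auxiliary graph $\C{G}$ on $q$ vertices, placing an edge between the vertices corresponding to $I, I' \in \C{F}(G,k)$ whenever $|I \cap I'| \geq k-1$. Each pairwise intersection test takes time polynomial in $k$ and $n$, so constructing $\C{G}$ costs $\C{O}(q^2 \cdot \mathrm{poly}(k,n))$, which is still FPT in $k$ times a polynomial in $n$. A breadth-first search from the vertex $i_s$ associated to $S$ then decides in time linear in the size of $\C{G}$ whether $i_t$ lies in the same connected component; the algorithm answers \emph{yes} precisely when it does.

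Correctness is inherited from the two reachability lemmas: any $i_s$-to-$i_t$ path in $\C{G}$ is realised by a reconfiguration sequence in $G$ that spends at most $k$ token jumps per edge of the path (shuffling tokens through the $(k-1)$-sized common subset of consecutive family members), and conversely any reconfiguration sequence $S = I_0, I_1, \ldots, I_\ell = T$ lifts, via the choice of supersets $I_j \subseteq I'_j \in \C{F}(G,k)$, to a walk in $\C{G}$ from $i_s$ to $i_t$, which contains a path. The only subtle point is ensuring that $S$ and $T$ are themselves vertices of $\C{G}$, which is precisely what the preprocessing step guarantees. Summing the time budgets yields a total running time of the form $\C{O}(h(k) \cdot n^{\C{O}(1)})$ for a computable function $h$, establishing fixed-parameter tractability. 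The main conceptual obstacle, namely bounding the structural complexity of reconfiguration by the combinatorics of the covering family, has already been absorbed into Lemmas~\ref{lem:reach1} and~\ref{lem:reach2}, so what remains is really only bookkeeping.
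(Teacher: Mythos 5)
Your proposal matches the paper's own proof essentially step for step: construct the auxiliary graph $\C{G}$ on the (pruned, $S$- and $T$-augmented) covering family, join two sets when their intersection has size at least $k-1$, and test connectivity between the vertices for $S$ and $T$, with correctness delegated to Lemmas~\ref{lem:reach1} and~\ref{lem:reach2}. The only addition is your explicit accounting of the running time, which the paper leaves implicit; the approach is otherwise identical.
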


\bibliographystyle{plain}
\bibliography{refs}

\end{document}